\begin{document}

\IEEEoverridecommandlockouts
\title{
Improving Wireless Physical Layer Security via Exploiting Co-Channel Interference
}
\author{ 
Lingxiang Li, Athina P. Petropulu, ~\IEEEmembership{Fellow,~IEEE},\\ Zhi Chen, ~\IEEEmembership{Member,~IEEE},
and Jun Fang, ~\IEEEmembership{Member,~IEEE}
\thanks{Lingxiang Li, Zhi Chen, and Jun Fang are with the National Key Laboratory of Science and Technology on Communications,
UESTC, Chengdu 611731, China (e-mails: lingxiang.li@rutgers.edu; \{chenzhi, JunFang\}@uestc.edu.cn).}
\thanks{Athina P. Petropulu is with the Department of Electrical and Computer Engineering, Rutgers--The State University of New Jersey, New Brunswick, NJ 08854 USA (e-mail: athinap@rci.rutgers.edu).}
}

\maketitle

\begin{abstract}
This paper considers a scenario in which a source-destination pair needs to establish
a confidential connection against an external eavesdropper, aided by the interference
generated by another source-destination pair that exchanges public messages.
The goal is to compute the maximum achievable secrecy degrees of freedom (S.D.o.F)
region of a MIMO two-user wiretap network. First, a cooperative secrecy transmission scheme is proposed,
whose feasible set is shown to achieve all S.D.o.F. pairs on the S.D.o.F. region boundary. In this way, the
determination of the S.D.o.F. region is reduced to a problem of maximizing the S.D.o.F. pair
over the proposed transmission scheme. The maximum achievable S.D.o.F. region
boundary points are obtained in closed form, and the construction of the precoding matrices
achieving the maximum S.D.o.F. region boundary is provided.
The obtained analytical expressions clearly show the relation
between the maximum achievable S.D.o.F. region and the number of antennas at each terminal.

\end{abstract}

\begin{keywords}
Physical-layer security, Cooperative communications, Multi-input Multi-output, Secrecy Degrees of Freedom.
\end{keywords}

\section{Introduction}
The area of physical (PHY) layer security has been pioneered by Wyner \cite{Wyner75},
who introduced the wiretap channel and and the notion of secrecy capacity, i.e., the rate at which
the legitimate receiver can correctly decode the source message,
while an unauthorized user, often referred to as eavesdropper, obtains no useful information about the source signal.
For the classical source-destination-eavesdropper Gaussian wiretap channel, the secrecy capacity is zero
when the quality of the legitimate channel is worse than the eavesdropping channel \cite{Leung78}.
One way to achieve non-zero secrecy rates in the latter case is to introduce
one \cite{Swindlehurst11,Lingxiang14,Han11,Ali11,Gan13,zheng151} or more
\cite{Zheng11,Jiangyuan11,LunDong10,Shuangyu13,Kalogerias13,Wang09,Hoon14} external helpers,
who transmit artificial noise, thus acting as jammers to the eavesdropper.
More complex $K$-user interference channels (IFC)
are considered in \cite{Tao10,Ali112,Koyluoglu11,Xie15},
where each user secures its communication from the remaining $K-1$ users by transmitting
jamming signals along with its message signal.

From a system design perspective, introducing non-message carrying artificial noise
into a network is power inefficient and {lowers the overall network throughput}.
In dense multiuser networks there is ubiquitous co-channel interference (CCI), which,
in a cooperative scenario could be designed to effectively act as noise and degrade the eavesdropping channel.
Indeed, there are recent results \cite{Xie15,Xie142,Koyluoglu112,Tung15,Kalantari15,Lv15} on exploiting CCI to enhance secrecy.
\cite{Xie15,Xie142,Koyluoglu112,Tung15} consider the scenario of a $K$-user IFC
in which the users wish to establish secure communication against an eavesdropper.
Specifically, \cite{Xie15,Xie142,Koyluoglu112} consider the single-antenna case
and examine the achievable secrecy degrees of freedom by applying interference alignment techniques.
The work of \cite{Tung15} considers the multi-antenna
case and proposes interference-alignment-based algorithms for the sake of maximizing the achievable secrecy sum rate.
In \cite{Kalantari15,Lv15}, a two-user wiretap interference network is considered, in which only one user needs to establish
a confidential connection against an external eavesdropper, and the secrecy rate is increased by exploiting CCI
due to the nonconfidential connection. \cite{Kalantari15,Lv15} maximize the secrecy transmission rate
of the confidential connection subject to a quality of service constraint for the non-confidential connection.

In this paper, we consider a two-user wiretap interference network as in \cite{Kalantari15,Lv15}, except that,
unlike \cite{Kalantari15,Lv15}, which assume the single input
single-output (SISO) case or multiple-input single-output
(MISO) case,  we address the most general multiple-input multiple-output (MIMO) case,
i.e., the case in which each terminal is equipped with multiple antennas. Out network comprises
a source destination pair exchanging confidential messages, another pair
exchanging public messages, and a passive eavesdropper.
Our goal is to exploit the interference generated by the second source destination pair,
in order to enhance the secrecy rate performance of the network. We should note that, although the eavesdropper is not
interested in the messages of the second pair, for uniformity, we will still refer to the rate of the second pair as secrecy rate.
Since determining the exact maximum achievable secrecy
rate of a helper-assisted wiretap channel, or of an interference channel is a very difficult problem \cite{Swindlehurst11,Lingxiang14,Han11,Ali11,Gan13,zheng151,Zheng11,
Jiangyuan11,LunDong10,Shuangyu13,Kalogerias13,Wang09,Hoon14,Tao10,Ali112},
we consider the high signal to noise ratio (SNR) behavior of the achievable secrecy
rate, i.e., the secrecy degrees of freedom (S.D.o.F.) as an alternative. A similar alternative has also been considered in
\cite{Xie15,Xie142,Koyluoglu112,Nafea13,Nafea14,Nafea15}.
Our main contributions are summarized below.
\begin{enumerate}
\item We propose a cooperative secrecy transmission scheme, in which the
message and interference signals lie in different subspaces at the destination of the confidential connection,
but are aligned along the same subspace at the eavesdropper.
We show that the proposed scheme
can achieve all the boundary points of the S.D.o.F. region (see \emph{Proposition 3}).
In this way, we reduce the determination of each S.D.o.F. region boundary point
to an S.D.o.F. pair maximization problem over our proposed transmission scheme.
\item We determine in closed form the Single-User points,
{\emph{SU}1 and \emph{SU}2 (see eq. (\ref{eq69}) and (\ref{eq510}), respectively) corresponding to when
only one user communicates information},
the {strict S.D.o.F. region boundary} (see eq. (\ref{eq74})), and the ending points of the {strict S.D.o.F. region boundary},
\emph{E}1 and \emph{E}2 (see eq. (\ref{eq521}) and (\ref{eq522}), respectively).
Our analytical results fully describe the dependence of the
S.D.o.F. region of a MIMO two-user wiretap interference channel on the number of antennas.
\item We derive in closed form the general term formulas
for the feasible precoding vector pairs corresponding to the proposed transmission scheme, based on which
we construct precoding matrices achieving S.D.o.F. pairs on the S.D.o.F. region boundary (see Table III).
\end{enumerate}

The corner point of our S.D.o.F. region corresponding to zero S.D.o.F for the nonconfidential connection
has also been studied in \cite{Nafea13,Nafea14,Nafea15}, wherein
the maximum achievable S.D.o.F. of a MIMO wiretap channel with a multi-antenna cooperative jammer has been studied.
Our corner point result is more general because, unlike \cite{Nafea13,Nafea14,Nafea15} it applies to any number of antennas.
It is interesting to note that although we derive the achievable S.D.o.F. from a signal processing point of view,
our corner point result matches the S.D.o.F. result of \cite{Nafea13,Nafea14,Nafea15},
which is derived from an information theoretic point of view.

The idea of signal subspace alignment is also used in  \cite{Agustin11,GouJafar10,YetisGou10,Jiayi14}
in the derivation of the D.o.F. of the $X$ channel and the $K$-user interference channel.
Due to the difference in signal models, the motivation and use of subspace alignment is different.
In \cite{Agustin11,GouJafar10,YetisGou10,Jiayi14}, the authors jointly design the
precoding matrices at the sources, which align multiple interference signals
into a small subspace at each receiver so that the sum dimension of the
interference-free subspaces remaining for the desired signals can be maximized.
In our work, we apply subspace alignment for the sake of degrading the eavesdropping channel
and our goal is to maximize the dimension difference of the interference-free subspaces that the
legitimate receiver and the eavesdropper can see.

The rest of this paper is organized as follows. In
Section II, we introduce a mathematical background, i.e.,
generalized singular value decomposition (GSVD), that
provides the basis for the derivations to follow.
In Section III, we describe the system model for the MIMO two-user wiretap interference channel
and formulate the S.D.o.F. maximization problem. In Section IV, we propose a
secrecy cooperative transmission scheme, and prove that its feasible set is sufficient
to achieve all S.D.o.F. pairs on the  S.D.o.F. region boundary. In Section V,
we determine the maximum achievable S.D.o.F. region boundary,
and uncover its connection to the number of antennas.
In Section VI, we construct the precoding matrices
which achieve the S.D.o.F. pair on the boundary. Numerical results are given in Section VII
and conclusions are drawn in Section VIII.

\textit{Notation:}
$x\sim\mathcal{CN}(0,\Sigma)$ means $x$ is a random variable following a complex circular Gaussian
distribution with mean zero and covariance $\Sigma$; $(a)^+ \triangleq \max(a,0)$;
$\lfloor a\rfloor$ denotes the biggest integer which is less or equal to $a$;
$|a|$ is the absolute value of $a$;
${\bf I}$ represents an identity matrix with appropriate size;
$\mathbb{C}^{N \times M}$ indicates a ${N \times M}$ complex matrix set;
${\bf{A}}^T$, ${\bf{A}}^H$, $\rm{tr}\{\bf{A}\}$,
$\rm{rank}\{\bf{A}\}$, and $|{\bf{A}}|$ stand for the transpose, hermitian transpose, trace,
rank and determinant of the matrix $\bf{A}$, respectively; ${\bf A}(:,j) $
indicates the $j$-th column of $\bf A$ while
and ${\bf A}(:,i:j) $ denotes the columns from $i$ to $j$ of $\bf A$;
${\rm {span}}({\bf A})$ and ${\rm {span}}({\bf A})^\perp$ are the subspace spanned by
the columns of $\bf A$ and its orthogonal complement, respectively;
${\rm {null}}({\bf A})$ denotes the null space of ${\bf A}$;
${\rm {span}}({\bf A})/{\rm {span}}({\bf B})\triangleq\{{{\bf{x}}|{\bf{x}}\in{\rm {span}}({\bf A}),
{\bf{x}} \notin {\rm {span}}({\bf B})}\}$; ${\rm {span}}({\bf A})\cap{\rm {span}}({\bf B})={\bf 0}$
means that ${\rm {span}}({\bf A})$ and ${\rm {span}}({\bf B})$ have no intersections;
${\rm {dim}}\{{\rm {span}}(\bf A)\}$ represents
the number of dimension of the subspace spanned by the columns of $\bf A$;
${\bf \Gamma}({\bf A})$ denotes the orthonormal basis of ${\rm{null}}({\bf A})$;
${\bf A}^\perp$ denotes the orthonormal basis of ${\rm{null}}({\bf A}^H)$.

\newtheorem{proposition}{Proposition}
\newtheorem{theorem}{Theorem}
\newtheorem{corollary}{Corollary}

\section{Mathematical Background}
Given two full rank matrices ${\bf A}\in {{\mathbb C} ^{N \times M}}$ and
${\bf B}\in {{\mathbb C} ^{N \times K}}$.
The GSVD of $({\bf A} ,{\bf B} )$ \cite{Paige81} returns
unitary matrices ${\bf\Psi}_1 \in {{\mathbb C} ^{M\times M}}$, ${\bf\Psi}_2 \in {{\mathbb C} ^{K\times K}}$
and ${\bf\Psi}_0 \in {{\mathbb C} ^{N\times N}}$,
non-negative diagonal matrices ${\bf D}_1\in {{\mathbb C} ^{M\times k}}$ and ${\bf D}_2\in {{\mathbb C} ^{K\times k}}$,
and a matrix ${\bf \Omega} \in {{\mathbb C} ^{k\times k}}$ with ${\rm{rank}}\{{\bf \Omega}\}=k$, such that
\begin{subequations}
\begin{align}
& {\bf A}^H={\bf \Psi}_1{\bf D}_1\left[ {\begin{array}{*{20}{c}}{\bf \Omega}^{-1}&{{{\bf{0}}}}
\end{array}} \right]{\bf \Psi}_0^H, \label{eq401a}\\
& {\bf B}^H={\bf \Psi}_2{\bf D}_2\left[ {\begin{array}{*{20}{c}}{\bf \Omega}^{-1}&{{{\bf{0}}}}
\end{array}} \right]{\bf \Psi}_0^H,
\label{eq401b}
\end{align}
\end{subequations}
with
${{\bf{D}}_1} = \left[ {\begin{array}{*{20}{c}}
{{{\bf{I}}_{r}}}&{\bf{0}}&{\bf{0}}\\
{\bf{0}}&{{{\bf{\Lambda}}_1}}&{\bf{0}}\\
{\bf{0}}&{\bf{0}}&{{{\bf{0}}}}
\end{array}} \right]$,
${{\bf{D}}_2} = \left[ {\begin{array}{*{20}{c}}
{{{\bf{0}}}}&{\bf{0}}&{\bf{0}}\\
{\bf{0}}&{{{\bf{\Lambda}}_2}}&{\bf{0}}\\
{\bf{0}}&{\bf{0}}&{{{\bf{I}}_p}}
\end{array}} \right]$, where the diagonal entries of ${\bf \Lambda}_1\in \mathbb{R}^{s\times s}$
and ${\bf \Lambda}_2\in \mathbb{R}^{s\times s}$ are greater than 0,
and ${\bf{D}}_1^H{{\bf{D}}_1} + {\bf{D}}_2^H{{\bf{D}}_2} = {\bf{I}}$.
It holds that
\begin{subequations}
\begin{align}
k\triangleq &  {\rm {rank}}\{[({\bf A}^H)^T, ({\bf B}^H)^T]^T\}=\min\{M+K,N\}, \label{eq11a}\\
p\triangleq & {\rm {dim}}\{ {\rm {span}}({\bf A})^\perp  \cap {\rm {span}}({\bf B}) \}=k- \min \{M,N\}, \label{eq11b} \\
r \triangleq &  {\rm{dim}}\{{\rm {span}}({\bf A})\cap {\rm {span}}({\bf B})^\perp\}=k- \min \{K,N\}, \label{eq11c}\\
s \triangleq & {\rm {dim}} \{{\rm {span}}({\bf A})\cap {\rm {span}}({\bf B})\}=k-p-r \nonumber \\
 &=(\min \{M,N\}+\min \{K,N\}-N)^+ \label{eq11d}\textrm{.}   %
\end{align}
\end{subequations}

Let ${\bf X}={\bf \Psi}_0\left[ {\begin{array}{*{20}{c}}{\bf \Omega}^{-1}&{{{\bf{0}}}}
\end{array}} \right]^H$ and substitute it into (\ref{eq401a}) and (\ref{eq401b}).
Then, (\ref{eq401a}) and (\ref{eq401b}) can be respectively rewritten as,
\begin{subequations}
\begin{align}
& {\bf A}{\bf \Psi}_1={\bf X}{\bf D}_1^H, \label{eq13a}\\
& {\bf B}{\bf \Psi}_2={\bf X}{\bf D}_2^H
\label{eq13b} \textrm{.}
\end{align}
\end{subequations}
Let ${\bf \Psi}_{11}$, ${\bf \Psi}_{12}$ and ${\bf \Psi}_{13}$ be the
collection of columns $1:r$, $r+1:r+s$, $r+s+1:M$ of ${\bf \Psi}_1$, respectively,
and let ${\bf \Psi}_{21}$, ${\bf \Psi}_{22}$ and ${\bf \Psi}_{23}$ be the
collection of columns $1:K-s-p$, $K-s-p+1:K-p$, $K-p+1:K$ of ${\bf \Psi}_2$, respectively.
In addition, let ${\bf X}_{1}$, ${\bf X}_{2}$ and ${\bf X}_{3}$ be the
collection of columns $1:r$, $r+1:r+s$, $r+s+1:k$ of ${\bf X}$, respectively.
We can rewrite (\ref{eq13a}) and (\ref{eq13b}) as ${\bf A}{\bf \Psi}_{11}={\bf X}_{1}$, ${\bf A}{\bf \Psi}_{12}={\bf X}_{2}{\bf \Lambda}_1$,
${\bf A}{\bf \Psi}_{13}={\bf 0}$; ${\bf B}{\bf \Psi}_{21}={\bf 0}$, ${\bf B}{\bf \Psi}_{22}={\bf X}_{2}{\bf \Lambda}_2$,
${\bf B}{\bf \Psi}_{23}={\bf X}_3$.

In the rest of the paper we will denote the GSVD decomposition in (\ref{eq13a}) and (\ref{eq13b}) as
\begin{align}
{\rm {GSVD}}({\bf A},{\bf B};N,M,K)=
({\bf\Psi}_1,{\bf\Psi}_2,{\bf \Lambda}_1,{\bf \Lambda}_2,{\bf X},k,r,s,p). \nonumber  %
\end{align}

With the GSVD decomposition, one can decompose the union
of ${\rm {span}}({\bf A})$ and ${\rm {span}}({\bf B})$ into three
subspaces, i.e.,
(i) ${\rm {span}}({\bf A})\cap {\rm {span}}({\bf B})^\perp$, which is also the same as ${\rm {span}}({\bf X}_1)$ and has $r$ independent vectors,
(ii) ${\rm {span}}({\bf A})\cap {\rm {span}}({\bf B})$, which is also the same as ${\rm {span}}({\bf X}_2)$ and has $s$ independent vectors,
and (iii) ${\rm {span}}({\bf A})^\perp  \cap {\rm {span}}({\bf B}) $, which is also the same as ${\rm {span}}({\bf X}_3)$ and has $p$ independent vectors.

\begin{proposition}
Consider two full rank matrices ${\bf A}\in {{\mathbb C} ^{N \times M}}$ and ${\bf B }\in {{\mathbb C} ^{N \times K}}$,
and the
${\rm {GSVD}}({\bf A},{\bf B}; N, M, K)=
({\bf\Psi}_1,{\bf\Psi}_2,{\bf \Lambda}_1,{\bf \Lambda}_2,{\bf X},k,r,s,p)$.

(i) ${\bf{A}}{\bf{v}}={\bf{B}}{\bf{w}} \ne {\bf 0} $ holds true
if and only if
\begin{subequations}
\begin{align}
{\bf v}={\bf \Phi}_{1}{\bf y}_{s1}=\left[ {\begin{array}{*{20}{c}}
 {\bf \Psi}_{12}{\bf \Lambda}_1^{-1} &{\bf \Gamma}({\bf A})
\end{array}} \right]
\left[ {\begin{array}{*{20}{c}}
{\bf{y}}_s\\
{\bf{y}}_1
\end{array}} \right], \label{eq80a} \\
{\bf w}={\bf \Phi}_{2}{\bf y}_{s2}=\left[ {\begin{array}{*{20}{c}}
{\bf \Psi}_{22}{\bf \Lambda}_2^{-1} &{\bf \Gamma}({\bf B})
\end{array}} \right]
\left[ {\begin{array}{*{20}{c}}
{\bf{y}}_s\\
{\bf{y}}_2
\end{array}} \right], \label{eq80b}
\end{align}
\end{subequations}
with ${\bf{y}}_s$ being any nonzero vectors, ${\bf{y}}_{s1}$, ${\bf{y}}_{s2}$,
${\bf{y}}_1$ and ${\bf{y}}_2$ being any vectors, with appropriate length.

(ii) The number of linearly independent vectors $\bf v$ satisfying ${\bf{A}}{\bf{v}}={\bf{B}}{\bf{w}} \ne {\bf 0} $
is $s+{\rm {dim}} \{{\rm {null}}({\bf{A}}) \}$.
\end{proposition}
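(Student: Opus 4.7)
My plan is to use the geometric content of the GSVD decomposition recorded in the paragraph just before the proposition: the subspace $\mathrm{span}({\bf A})\cap\mathrm{span}({\bf B})$ equals $\mathrm{span}({\bf X}_2)$ and has dimension $s$, and the identities ${\bf A}{\bf \Psi}_{12}={\bf X}_2{\bf \Lambda}_1$, ${\bf B}{\bf \Psi}_{22}={\bf X}_2{\bf \Lambda}_2$, ${\bf A}{\bf \Psi}_{13}={\bf 0}$, ${\bf B}{\bf \Psi}_{21}={\bf 0}$ hold. The condition ${\bf A}{\bf v}={\bf B}{\bf w}\neq{\bf 0}$ is exactly the statement that the common image vector lies in $\mathrm{span}({\bf X}_2)$, so the whole proof is a matter of writing ${\bf v}$ and ${\bf w}$ as a particular solution plus a homogeneous one (an element of the relevant null space).

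For part (i), I would first prove sufficiency by direct substitution. Plugging (\ref{eq80a}) into ${\bf A}{\bf v}$ gives ${\bf A}{\bf \Psi}_{12}{\bf \Lambda}_1^{-1}{\bf y}_s+{\bf A}{\bf \Gamma}({\bf A}){\bf y}_1={\bf X}_2{\bf y}_s$, and similarly ${\bf B}{\bf w}={\bf X}_2{\bf y}_s$; these agree and are nonzero whenever ${\bf y}_s\neq{\bf 0}$ because ${\bf X}_2$ has full column rank (its columns form a basis for the $s$-dimensional intersection). For necessity, I would argue that any nonzero vector in $\mathrm{span}({\bf A})\cap\mathrm{span}({\bf B})$ can be written as ${\bf X}_2{\bf y}_s$ for a unique nonzero ${\bf y}_s$. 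Setting ${\bf A}{\bf v}={\bf X}_2{\bf y}_s={\bf A}({\bf \Psi}_{12}{\bf \Lambda}_1^{-1}{\bf y}_s)$ forces ${\bf v}-{\bf \Psi}_{12}{\bf \Lambda}_1^{-1}{\bf y}_s\in\mathrm{null}({\bf A})$, hence ${\bf v}$ has exactly the form (\ref{eq80a}); the same reasoning with ${\bf B}$ and ${\bf \Psi}_{22}$ yields (\ref{eq80b}) with the same ${\bf y}_s$.

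For part (ii), the goal is to count the dimension of the set of admissible $\bf v$'s in (\ref{eq80a}). The candidate count is clearly $s+\dim\{\mathrm{null}({\bf A})\}$ since ${\bf \Lambda}_1$ is invertible and ${\bf \Gamma}({\bf A})$ has $\dim\{\mathrm{null}({\bf A})\}$ orthonormal columns. What needs checking is that the two blocks ${\bf \Psi}_{12}{\bf \Lambda}_1^{-1}$ and ${\bf \Gamma}({\bf A})$ give linearly independent columns when stacked into ${\bf \Phi}_1$. Here I would invoke that $\mathrm{span}({\bf \Psi}_{12})$ and $\mathrm{null}({\bf A})$ intersect trivially: any column of ${\bf \Psi}_{12}$ is mapped by ${\bf A}$ into a column of ${\bf X}_2{\bf \Lambda}_1$, which is nonzero since ${\bf \Lambda}_1$ is invertible and ${\bf X}_2$ has full column rank, so no nontrivial combination of those columns can lie in $\mathrm{null}({\bf A})$. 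Combined with the fact that ${\bf \Lambda}_1^{-1}$ is invertible, this gives the claimed dimension count.

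The only mildly subtle step is the trivial-intersection argument in (ii); everything else is bookkeeping with the GSVD block structure. I would guard against the edge case $s=0$ (when the intersection is empty and no nonzero solution exists) by noting that in that situation ${\bf y}_s$ must be the empty vector and the claim reduces to the standard dimension of $\mathrm{null}({\bf A})$, which is vacuous for the original statement ${\bf A}{\bf v}={\bf B}{\bf w}\neq{\bf 0}$ since no such $\bf v$ exists.
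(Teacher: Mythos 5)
Your proposal is correct and follows essentially the same route as the paper's Appendix A proof: sufficiency via the GSVD identities ${\bf A}{\bf \Psi}_{12}{\bf \Lambda}_1^{-1}={\bf B}{\bf \Psi}_{22}{\bf \Lambda}_2^{-1}={\bf X}_2$, necessity via the fact that ${\bf A}{\bf v}={\bf B}{\bf w}$ forces the common image into ${\rm span}({\bf X}_2)={\rm span}({\bf A})\cap{\rm span}({\bf B})$ so that ${\bf v}$ is a particular solution plus an element of ${\rm null}({\bf A})$, and the count in (ii) via the trivial intersection of ${\rm span}({\bf \Psi}_{12})$ with ${\rm null}({\bf A})$. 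The only cosmetic differences are that you argue necessity directly where the paper argues by contradiction, and you justify the trivial intersection by injectivity of ${\bf A}$ on ${\rm span}({\bf \Psi}_{12})$ rather than via ${\rm span}({\bf \Psi}_{13})={\rm span}({\bf \Gamma}({\bf A}))$ and unitarity of ${\bf \Psi}_1$; both are sound.
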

\begin{proof}
See Appendix \ref{appD}.
\end{proof}

\begin{figure}[!t]
\centering
\includegraphics[width=3in]{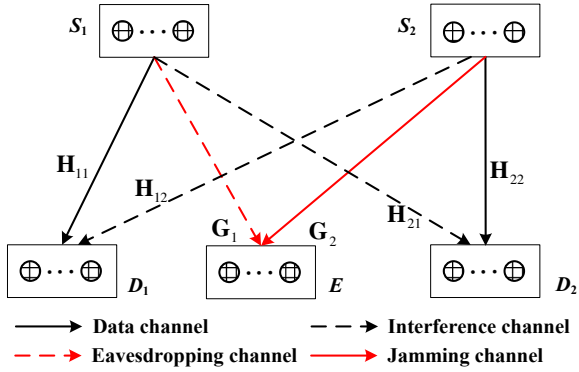}
\DeclareGraphicsExtensions. \caption{A MIMO two-user wiretap interference channel}
\vspace* {-12pt}
\end{figure}
\section{System Model and Problem Statement}
We consider a MIMO interference network which consists of a wiretap channel \emph{S}$_1$-\emph{D}$_1$-\emph{E}
and a point-to-point channel \emph{S}$_2$-\emph{D}$_2$ (see Fig. 1).
In a real setting, the former channel
would correspond to a source-destination pair that needs to maintain secret communications,
while the latter would correspond to a public communication system. While communicating with
its intended destination, \emph{S}$_2$ acts as a jammer to the external passive eavesdropper \emph{E}.
\emph{S}$_1$ and \emph{S}$_2$ are equipped with $N_s^1$, $N_s^2$ antennas, respectively;
\emph{D}$_1$, \emph{D}$_2$ and \emph{E} are equipped with $N_d^1$, $N_d^2$ and
$N_e$ antennas, respectively. Let ${\bf s}_1 \sim \mathcal{CN}(\bf{0},\bf{I})$ and
${\bf s}_2 \sim \mathcal{CN}(\bf{0},\bf{I})$ be the messages transmitted from \emph{S}$_1$ and \emph{S}$_2$, respectively.
Each message is precoded by a matrix before transmission.
The signals received at the legitimate receiver \emph{D}$_i$ can be expressed as
\begin{align}
{{\bf y}_d^i} = {\bf{H}}_{i1}{\bf V}{\bf s}_1 + {{\bf{H}}_{i2}}{{\bf{W}}{\bf s}_2} + {{\bf{n}}_d^i}, i=1, 2,\label{eq1}
\end{align}
while the signal received at the eavesdropper \emph{E} can be expressed as
\begin{align}
{{\bf{y}}_e} = {{\bf{G}}_{1}}{\bf V}{\bf s}_1 + {{\bf{G}}_{2}}{{\bf{W}}{\bf s}_2} + {{\bf{n}}_e}\textrm{.} \label{eq2}
\end{align}
Here, ${\bf{V}}\in\mathbb{C}^{N_s^1 \times K_v}$ and ${\bf{W}}\in\mathbb{C}^{N_s^2 \times K_w}$
are the precoding matrices at \emph{S}$_1$ and \emph{S}$_2$, respectively;
${{\bf{n}}_d^i} \sim \mathcal{CN}(\bf{0},\bf{I}) $ and ${{\bf{n}}_e} \sim \mathcal{CN}(\bf{0},\bf{I})$
represent noise at the $i$th destination \emph{D}$_i$ and the eavesdropper \emph{E}, respectively;
${\bf{H}}_{ij}\in\mathbb{C}^{N_d^i \times N_s^j}$, $i, j \in \{1, 2\}$, denotes the channel matrix from
\emph{S}$_j$ to \emph{D}$_i$;
${\bf{G}}_{j}\in\mathbb{C}^{N_e \times N_s^j}$, $j \in \{1, 2\}$, represents the
channel matrix from \emph{S}$_j$ to \emph{E}.

In this paper, we make the following assumptions:
\begin{enumerate}
\item The messages ${\bf s}_1$ and ${\bf s}_2$ are independent of each other,
and independent of the noise vectors ${\bf n}_d^i$ and ${\bf n}_e$.
\item CCI is treated as noise at each receiver. We assume Gaussian signaling for
\emph{S}$_2$. Thus the MIMO wiretap channel \emph{S}$_1$-\emph{D}$_1$-\emph{E} is
Gaussian. For this case, a Gaussian input signal at \emph{S}$_1$ is the optimal
choice \cite{Liu09, Liu10}.
\item All channel matrices are full rank. Global channel state information (CSI) is available, including the CSI for the eavesdropper.
This is possible in situations in which the eavesdropper is an active member
of the network, and thus its whereabouts and behavior can be monitored.
\end{enumerate}


The achievable secrecy rate for transmitting the message
${\bf s}_1$ and ${\bf s}_2$ are respectively given as \cite{OggierBabak11}
\begin{align}
R_s^1 &=(R_d^1-R_e)^+, \label{eq4}\\
R_s^2 &=R_d^2 \label{eq5} \textrm{.}
\end{align}
where
\begin{subequations}
\begin{align}
& R_d^1 = {\rm {log}}|{\bf I}+({\bf I}+{\bf H}_{12}{\bf Q}_w{\bf H}_{12}^H)^{-1}{\bf H}_{11}{\bf Q}_v{\bf H}_{11}^H|, \label{eq3a}\\
& R_d^2 = {\rm {log}}|{\bf I}+({\bf I}+{\bf H}_{21}{\bf Q}_v{\bf H}_{21}^H)^{-1}{\bf H}_{22}{\bf Q}_w{\bf H}_{22}^H|, \label{eq3b}\\
& R_e = {\rm {log}}|{\bf I}+({\bf I}+{\bf G}_{2}{\bf Q}_w{\bf G}_{2}^H)^{-1}{\bf G}_{1}{\bf Q}_v{\bf G}_{1}^H|,
\label{eq3c}
\end{align}
\end{subequations}
with ${\bf Q}_v\triangleq{\bf {VV}}^H$ and ${\bf Q}_w\triangleq{\bf {WW}}^H$ denoting the transmit
covariance matrices of \emph{S}$_1$ and \emph{S}$_2$, respectively.

The \emph{achievable secrecy rate region} is the set of all secrecy rate pairs, i.e.,
$\mathcal{R}  \buildrel \Delta \over = \mathop  \cup
\limits_{({\bf V}, {\bf W}) \in {\mathcal I}} (R_s^1,R_s^2)$,
where ${\mathcal I}\triangleq \{({\bf V}, {\bf W})|{\rm{tr}} \{ {\bf {VV}}^H\}=P, {\rm{tr}} \{ {\bf {WW}}^H\}= P\}$,
with $P$ denoting the transmit power budget.
Generally, the determination of the outer boundary of $\mathcal{R}$ is a non-convex problem.
Next, we study a simpler problem, namely the \emph{achievable secrecy degrees of freedom region},
defined as
\begin{align}
\mathcal{D}\buildrel \Delta \over = \mathop  \cup \limits_{({\bf V}, {\bf W}) \in {\mathcal I}} ( d_s^1,d_s^2) , \label{eq7}
\end{align}
where $d_s^i$ denotes the high SNR behavior of the achievable secrecy rate, i.e.,
\begin{align}
d_s^i\triangleq \mathop{\lim }\limits_{ P \to \infty } \dfrac{R_s^i}{{\rm log} \ P}, i \in \{1,2\}\label{eq8}.
\end{align}
As shown in Fig. 2, the outer boundary of $\mathcal{D}$ consists of the {strict
S.D.o.F. region boundary} (the part between \emph{E}1 and \emph{E}2 in the graph)
and the {non-strict S.D.o.F. region boundary} (the
vertical part below \emph{E}1 and the horizontal part up to \emph{E}2 of the graph).
The points marked by \emph{SU}1 and \emph{SU}2
correspond to single user S.D.o.F., i.e., when only one user communicates.
For an arbitrary point on the
{strict S.D.o.F. region boundary}, it is impossible to
improve one S.D.o.F., without decreasing the other.
On the other hand, for a point on the {non-strict S.D.o.F. region boundary}, one S.D.o.F. can be further improved
while the other S.D.o.F. remains at the maximum value.

In the following, we will determine the outer boundary of $\mathcal{D}$,
and find its connection to the number of antennas. Towards that goal, we first introduce a cooperative transmission scheme.
Then, by studying that scheme we determine in closed form the outer boundary of $\mathcal{D}$
and also we construct the precoding matrices which achieve the outer boundary of $\mathcal{D}$.

\begin{figure}[!t]
\centering
\includegraphics[width=3in]{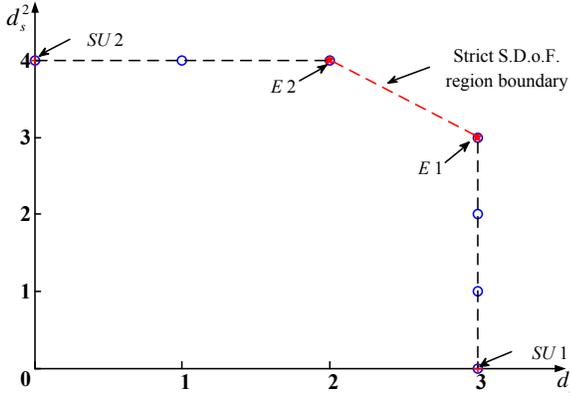}
\DeclareGraphicsExtensions. \caption{Achievable S.D.o.F. region boundary}
\vspace* {-12pt}
\end{figure}

\section{Cooperative Secrecy Transmission Scheme}
\begin{proposition}
For the precoding matrix pair $({\bf V},{\bf W})$,
the achieved S.D.o.F. equals
\begin{subequations}
\begin{align}
&d_s^1({\bf V},{\bf W})= {\rm {rank}}\{{\bf{H}}_{11}{\bf V}\}
-m({\bf V},{\bf W})-n({\bf V},{\bf W}), \label{eq18a}\\
&d_s^2({\bf V},{\bf W})= {\rm{dim}}\{{\rm{span}}({\bf{H}}_{22}{\bf{W}})/{\rm{span}}({\bf{H}}_{21}{\bf{V}})\}, \label{eq18b}
\end{align}
\end{subequations}
in which $m({\bf V},{\bf W})\triangleq {\rm{dim}}\{{\rm{span}}({\bf{G}}_{1}{\bf{V}})/{\rm{span}}({\bf{G}}_{2}{\bf{W}})\} $
and $n({\bf V},{\bf W})\triangleq {\rm{dim}}\{{\rm{span( }}{{\bf{H}}_{12}}{\bf{W}}{\rm{) }} \cap {\rm{span}}({{\bf{H}}_{11}}{\bf{V}})\}$.
\end{proposition}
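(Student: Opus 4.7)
The plan is to extract the high-SNR pre-log slope of each of $R_d^1$, $R_d^2$, and $R_e$ individually by rewriting every inverse-weighted log-determinant in (\ref{eq3a})--(\ref{eq3c}) as a difference of two standard log-determinants, applying the elementary identity $\lim_{P\to\infty}\log|{\bf I}+P{\bf M}{\bf M}^H|/\log P=\text{rank}\{{\bf M}\}$ piecewise, and collapsing the resulting rank counts through the subspace-intersection identity $\text{rank}\{[{\bf A},{\bf B}]\}=\text{rank}\{{\bf A}\}+\text{rank}\{{\bf B}\}-\dim(\text{span}({\bf A})\cap\text{span}({\bf B}))$.

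Concretely, for the confidential user I would first rewrite
\begin{equation*}
R_d^1=\log|{\bf I}+{\bf H}_{11}{\bf Q}_v{\bf H}_{11}^H+{\bf H}_{12}{\bf Q}_w{\bf H}_{12}^H|-\log|{\bf I}+{\bf H}_{12}{\bf Q}_w{\bf H}_{12}^H|,
\end{equation*}
and decompose $R_e$ analogously. Substituting ${\bf Q}_v=(P/K_v){\bf V}{\bf V}^H$, ${\bf Q}_w=(P/K_w){\bf W}{\bf W}^H$ and letting $P\to\infty$, the slope of $R_d^1$ equals $\text{rank}\{[{\bf H}_{11}{\bf V},{\bf H}_{12}{\bf W}]\}-\text{rank}\{{\bf H}_{12}{\bf W}\}$, which by the intersection identity simplifies to $\text{rank}\{{\bf H}_{11}{\bf V}\}-n({\bf V},{\bf W})$; the slope of $R_e$ equals $\text{rank}\{{\bf G}_1{\bf V}\}-\dim(\text{span}({\bf G}_1{\bf V})\cap\text{span}({\bf G}_2{\bf W}))$, which is precisely $m({\bf V},{\bf W})$. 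Inserting into (\ref{eq4}) yields (\ref{eq18a}), with the understanding that the $(\cdot)^+$ in (\ref{eq4}) drives $d_s^1$ to zero whenever the right-hand side would come out negative.

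For the nonconfidential user, $R_s^2=R_d^2$, and the identical rewriting applied to (\ref{eq3b}) produces slope $\text{rank}\{{\bf H}_{22}{\bf W}\}-\dim(\text{span}({\bf H}_{22}{\bf W})\cap\text{span}({\bf H}_{21}{\bf V}))$, which is by definition the quotient-dimension expression $\dim(\text{span}({\bf H}_{22}{\bf W})/\text{span}({\bf H}_{21}{\bf V}))$ appearing in (\ref{eq18b}).

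The only real subtlety is choosing the right object to which the pre-log identity is applied: attacking the inverse-weighted form $\log|{\bf I}+({\bf I}+\cdot)^{-1}\cdot|$ in (\ref{eq3a}) directly would entangle the desired-signal and interference subspaces inside a single determinant, whereas first splitting it into a signal-plus-interference term minus a pure-interference term cleanly isolates the interaction into a single subspace-intersection count ($n({\bf V},{\bf W})$ at the legitimate receiver, and the analogous intersection at the eavesdropper). Once this reorganization is in place, the rest is routine linear algebra.
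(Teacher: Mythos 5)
Your proposal is correct, and it reaches (\ref{eq18a})--(\ref{eq18b}) by a genuinely different route from the paper's. The paper keeps each rate in its inverse-weighted form, takes the SVD of the interference covariance ${\bf H}_{12}\bar {\bf Q}_w{\bf H}_{12}^H=[{\bf U}_1\ {\bf U}_0]\,{\rm diag}({\bf \Sigma}_1,{\bf 0})\,[{\bf U}_1\ {\bf U}_0]^H$, and argues that as $P\to\infty$ only the component of ${\bf H}_{11}{\bf V}$ projected onto ${\rm span}({\bf H}_{12}{\bf W})^\perp$, namely ${\bf U}_0{\bf U}_0^H{\bf H}_{11}{\bf V}$, contributes to the pre-log, which yields the quotient-dimension expression directly. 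You instead use $\log|{\bf I}+({\bf I}+{\bf B})^{-1}{\bf A}|=\log|{\bf I}+{\bf A}+{\bf B}|-\log|{\bf I}+{\bf B}|$, apply the rank--slope identity to each plain log-determinant to obtain ${\rm rank}\{[{\bf H}_{11}{\bf V},{\bf H}_{12}{\bf W}]\}-{\rm rank}\{{\bf H}_{12}{\bf W}\}$, and collapse via the dimension formula for a sum of subspaces; the two computations agree because the rank of the projection of ${\rm span}({\bf A})$ onto ${\rm span}({\bf B})^\perp$ equals ${\rm rank}\{[{\bf A},{\bf B}]\}-{\rm rank}\{{\bf B}\}$. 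What your version buys is that the difference-of-determinants form sidesteps the paper's somewhat delicate limit step in which the term $\tfrac{1}{P}{\bf U}_1{\bf \Sigma}_1^{-1}{\bf U}_1^H$ is discarded inside a single determinant, and it makes the parallel treatment of $R_d^1$, $R_d^2$ and $R_e$ immediate; what the paper's version buys is the explicit projector ${\bf U}_0{\bf U}_0^H$, which is reused conceptually when the alignment sets $\bar{\mathcal I}_1$, $\bar{\mathcal I}_2$ are introduced. One caveat common to both arguments: (\ref{eq18a}) is stated without the $(\cdot)^+$ inherited from (\ref{eq4}); you flag this explicitly, which is the right thing to do.
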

\begin{proof}
See Appendix \ref{appA}.
\end{proof}

According to \emph{Proposition 2}, the achievable S.D.o.F. of \emph{S}$_1$-\emph{D}$_1$ depends only on the dimension difference
of the interference-free subspaces which \emph{D}$_1$ and \emph{E} can see.
Motivated by this observation, we propose a transmission scheme in which the subspace spanned by the message signal
has no intersection with the subspace spanned by the interference signal at \emph{D}$_1$,
and belongs to the subspace spanned by the interference signal
at \emph{E}. In this way, \emph{D}$_1$ can see an interference-free
message signal, such that $R_d^1$ scales with ${\rm{log}}(P)$, while \emph{E}
can only see a distorted version of the message signal, such that $R_e$
converges to a constant as $P$ approaches to infinity. In other words,
the precoding matrix pairs belongs to the set $\bar {\mathcal I}$, which is defined as follows:
\begin{align}
\bar{\mathcal I} \triangleq \{({\bf V}, {\bf W})|({\bf V}, {\bf W}) \in \bar{\mathcal I}_1 \cap \bar{\mathcal I}_2\cap {\mathcal I}\}, \nonumber
\end{align}
where
\begin{subequations}
\begin{align}
&\bar{\mathcal I}_1 \triangleq \{({\bf V}, {\bf W})|{\rm{span}}({\bf{G}}_{1}{\bf{V}}) \subset {\rm{span}}({\bf{G}}_{2}{\bf{W}})\}, \label{eq10a}\\
&\bar{\mathcal I}_2 \triangleq \{({\bf V}, {\bf W})|{\rm{span}}({\bf{H}}_{11}{\bf{V}}) \cap {\rm{span}}({\bf{H}}_{12}{\bf{W}})={\bf 0}\}.
\label{eq10b}
\end{align}
\end{subequations}

Next, we show that the proposed scheme can achieve all the boundary points of the S.D.o.F. region.
 \begin{proposition}
Let
\begin{align}
\bar{\mathcal{D}}\buildrel \Delta \over = \mathop  \cup \limits_{({\bf V}, {\bf W})
\in \bar{\mathcal I}} ( d_s^1,d_s^2) \label{eq15}\textrm{.}
\end{align}
Then, the outer boundary of $\bar{\mathcal{D}}$ is the same as that of ${\mathcal{D}}$.
\end{proposition}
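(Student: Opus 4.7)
The plan is as follows. Since $\bar{\mathcal I}\subseteq\mathcal I$ is immediate from the definitions, the inclusion $\bar{\mathcal D}\subseteq\mathcal D$ holds automatically, and the only nontrivial direction is to show that every outer-boundary point of $\mathcal D$ already lies in $\bar{\mathcal D}$. I plan to establish the stronger statement that for every $({\bf V},{\bf W})\in\mathcal I$ there exists a pair $(\tilde{\bf V},{\bf W})\in\bar{\mathcal I}$ with $d_s^i(\tilde{\bf V},{\bf W})\ge d_s^i({\bf V},{\bf W})$ for $i=1,2$. Keeping the second precoder fixed, I will carve out of ${\bf V}$ a reduced precoder $\tilde{\bf V}$ that simultaneously eliminates the leakage penalty $m$ and the interference penalty $n$ from Proposition 2, without sacrificing the effective rank seen at \emph{D}$_1$.

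The construction is guided by Proposition 2: the identity $d_s^1({\bf V},{\bf W})={\rm{rank}}\{{\bf H}_{11}{\bf V}\}-m-n$ says that up to $m$ directions of ${\bf V}$ violate constraint~(\ref{eq10a}) and $n$ further ones violate~(\ref{eq10b}), leaving $d_s^1$ ``useful'' directions to be identified. Explicitly, I would set $\mathcal L\triangleq\{{\bf v}\in{\rm{span}}({\bf V}):{\bf G}_1{\bf v}\in{\rm{span}}({\bf G}_2{\bf W})\}$. A quotient count through the map ${\rm{span}}({\bf V})\to\mathbb C^{N_e}/{\rm{span}}({\bf G}_2{\bf W})$ gives $\dim\mathcal L=\dim\,{\rm{span}}({\bf V})-m$, and the inclusion $\mathcal L\cap\ker{\bf H}_{11}\subseteq{\rm{span}}({\bf V})\cap\ker{\bf H}_{11}$ then yields $\dim{\bf H}_{11}(\mathcal L)\ge r_h-m$ with $r_h\triangleq{\rm{rank}}\{{\bf H}_{11}{\bf V}\}$. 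Since ${\bf H}_{11}(\mathcal L)\cap{\rm{span}}({\bf H}_{12}{\bf W})$ is contained in ${\rm{span}}({\bf H}_{11}{\bf V})\cap{\rm{span}}({\bf H}_{12}{\bf W})$ and hence has dimension at most $n$, a complement $\mathcal S$ of this intersection inside ${\bf H}_{11}(\mathcal L)$ satisfies $\dim\mathcal S\ge r_h-m-n=d_s^1({\bf V},{\bf W})$ and $\mathcal S\cap{\rm{span}}({\bf H}_{12}{\bf W})=\{{\bf 0}\}$. I then lift: choose $\tilde U\subseteq\mathcal L$ on which ${\bf H}_{11}$ restricts to a bijection onto $\mathcal S$ (for instance, any complement of $\mathcal L\cap\ker{\bf H}_{11}$ inside the preimage of $\mathcal S$), and take $\tilde{\bf V}$ to be a basis of $\tilde U$ rescaled so that ${\rm{tr}}\{\tilde{\bf V}\tilde{\bf V}^H\}=P$.

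Verifying the outcome is then short. By construction ${\rm{span}}({\bf G}_1\tilde{\bf V})\subseteq{\rm{span}}({\bf G}_2{\bf W})$ and ${\rm{span}}({\bf H}_{11}\tilde{\bf V})=\mathcal S$ is disjoint from ${\rm{span}}({\bf H}_{12}{\bf W})$, so $(\tilde{\bf V},{\bf W})\in\bar{\mathcal I}_1\cap\bar{\mathcal I}_2\cap\mathcal I=\bar{\mathcal I}$ and therefore $m(\tilde{\bf V},{\bf W})=n(\tilde{\bf V},{\bf W})=0$. Proposition 2 then gives $d_s^1(\tilde{\bf V},{\bf W})={\rm{rank}}\{{\bf H}_{11}\tilde{\bf V}\}=\dim\mathcal S\ge d_s^1({\bf V},{\bf W})$. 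For the second user, ${\rm{span}}(\tilde{\bf V})\subseteq{\rm{span}}({\bf V})$ entails ${\rm{span}}({\bf H}_{21}\tilde{\bf V})\subseteq{\rm{span}}({\bf H}_{21}{\bf V})$, so the quotient ${\rm{span}}({\bf H}_{22}{\bf W})/{\rm{span}}({\bf H}_{21}\tilde{\bf V})$ can only grow in dimension and $d_s^2(\tilde{\bf V},{\bf W})\ge d_s^2({\bf V},{\bf W})$ by Proposition 2 again.

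The hard part, and essentially the only place where a careful argument is needed, is the chain of three dimension inequalities that underpin the construction: $\dim\mathcal L=\dim\,{\rm{span}}({\bf V})-m$, $\dim{\bf H}_{11}(\mathcal L)\ge r_h-m$, and $\dim({\bf H}_{11}(\mathcal L)\cap{\rm{span}}({\bf H}_{12}{\bf W}))\le n$. The first is a direct quotient computation, the second exploits the monotonicity of the intersection with $\ker{\bf H}_{11}$ along $\mathcal L\subseteq{\rm{span}}({\bf V})$, and the third uses the containment ${\bf H}_{11}(\mathcal L)\subseteq{\rm{span}}({\bf H}_{11}{\bf V})$ together with the very definition of $n$. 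Once these three bounds are in place, the existence of the lift $\tilde U$ is routine linear algebra and the remainder of the argument is bookkeeping with Proposition 2.
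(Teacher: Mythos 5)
Your proposal is correct, and it reaches the paper's key intermediate claim — that any $({\bf V},{\bf W}) \in {\mathcal I}$ can be dominated componentwise by a pair in $\bar{\mathcal I}$ obtained by shrinking ${\rm span}({\bf V})$ while leaving ${\rm span}({\bf W})$ untouched — by a genuinely different route. The paper performs the same surgery constructively with two successive GSVDs: the GSVD of $({\bf H}_{12}{\bf W},{\bf H}_{11}{\bf V})$ exhibits an explicit block of columns of ${\bf V}$ (after an invertible change of basis) whose removal kills the intersection penalty $n$ without reducing ${\rm rank}\{{\bf H}_{11}{\bf V}\}-n$, and the GSVD of $({\bf G}_{2}\hat{\bf W},{\bf G}_{1}\hat{\bf V})$ then exhibits the columns whose removal kills the leakage penalty $m$ at a cost of at most $m$ rank dimensions; monotonicity of spans under column deletion gives $d_s^2$ for free, exactly as in your last step. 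You instead perform the two reductions in the opposite order and replace the GSVD bookkeeping with rank–nullity and quotient-space counting: the identification $\dim{\mathcal L}=\dim{\rm span}({\bf V})-m$ via the map into $\mathbb{C}^{N_e}/{\rm span}({\bf G}_2{\bf W})$, the bound $\dim {\bf H}_{11}({\mathcal L})\ge {\rm rank}\{{\bf H}_{11}{\bf V}\}-m$ via $\mathcal L\cap\ker{\bf H}_{11}\subseteq{\rm span}({\bf V})\cap\ker{\bf H}_{11}$, and the containment argument bounding the residual intersection by $n$ are all sound, and the lift $\tilde U$ exists for the reason you give. What the paper's version buys is an explicit matrix recipe (consistent with its later constructive use of GSVD throughout Sections V–VI); what yours buys is a shorter, basis-free argument that makes transparent why each penalty term costs at most its own dimension. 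One cosmetic caveat: in the degenerate case ${\mathcal S}=\{{\bf 0}\}$ the rescaling to ${\rm tr}\{\tilde{\bf V}\tilde{\bf V}^H\}=P$ is vacuous and the trace constraint in ${\mathcal I}$ cannot be met literally; this is a definitional quibble shared by the paper's own construction (which may likewise delete all columns) and does not affect the boundary statement, but a one-line remark handling $d_s^1=0$ separately would make the argument airtight.
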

\begin{proof}
See Appendix \ref{appB}.
\end{proof}

By restricting $({\bf V}, {\bf W})$ to lie in $\bar{\cal I}$,
we exclude a large number of precoding matrix pairs in $\cal I$, which
have no contribution to the outer boundary, and thus reduce
the number of precoding matrices we need to investigate in
determining the outer boundary of the S.D.o.F. region.
It turns out that we can reduce the set even further without
changing the achievable S.D.o.F. region; this is discussed in the following corollary,
where we introduce a new set $\hat {\cal I}$, which is a subset of $\bar{\cal I}$.

\begin{corollary}
Let
\begin{align}
\hat{\mathcal{D}}\buildrel \Delta \over = \mathop
\cup \limits_{({\bf V}, {\bf W}) \in \hat{\mathcal I}} (d_s^1,d_s^2)\textrm{,}
\end{align}
where the set of $\hat{\mathcal I}$ is defined as follows,
\begin{align}
\hat{\mathcal I} \triangleq \{({\bf V}, {\bf W})|{\bf{G}}_{1}{\bf{V}} ={\bf{G}}_{2}{\bf{W}}(:, 1:K_v),
({\bf V}, {\bf W}) \in \bar{\mathcal I}\}. \label{eq00a}
\end{align}
Then, $\hat{\mathcal{D}}=\bar{\mathcal{D}}$.
\end{corollary}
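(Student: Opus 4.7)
The plan is to establish both set inclusions. Since $\hat{\mathcal{I}} \subseteq \bar{\mathcal{I}}$ directly from the definition, the inclusion $\hat{\mathcal{D}} \subseteq \bar{\mathcal{D}}$ is immediate, so the only real work is in showing $\bar{\mathcal{D}} \subseteq \hat{\mathcal{D}}$.

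For the reverse inclusion I will fix an arbitrary $({\bf V},{\bf W}) \in \bar{\mathcal{I}}$ and exhibit a companion pair $({\bf V}',{\bf W}') \in \hat{\mathcal{I}}$ realizing the same S.D.o.F.\ pair $(d_s^1,d_s^2)$. Membership of $({\bf V},{\bf W})$ in $\bar{\mathcal{I}}_1$ gives ${\rm{span}}({\bf G}_1{\bf V}) \subseteq {\rm{span}}({\bf G}_2{\bf W})$, so there exists a matrix ${\bf T}$ with ${\bf G}_1{\bf V}={\bf G}_2{\bf W}{\bf T}$. I then set ${\bf V}' = {\bf V}$ and ${\bf W}' = [\,{\bf W}{\bf T},\;\beta{\bf W}\,]$, where $\beta>0$ is a scaling reserved for meeting the power budget.

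The central observation is that the prepended block ${\bf W}{\bf T}$ lies in the column span of ${\bf W}$, so ${\rm{span}}({\bf W}')={\rm{span}}({\bf W})$. This identity propagates to ${\rm{span}}({\bf H}_{12}{\bf W}')={\rm{span}}({\bf H}_{12}{\bf W})$, ${\rm{span}}({\bf H}_{22}{\bf W}')={\rm{span}}({\bf H}_{22}{\bf W})$, and ${\rm{span}}({\bf G}_2{\bf W}')={\rm{span}}({\bf G}_2{\bf W})$. From these identities three things follow at once: (i) $({\bf V}',{\bf W}') \in \bar{\mathcal{I}}_1 \cap \bar{\mathcal{I}}_2$; (ii) the defining equality ${\bf G}_1{\bf V}'={\bf G}_2{\bf W}'(:,1:K_v)$ of $\hat{\mathcal{I}}$ holds by construction, so $({\bf V}',{\bf W}') \in \hat{\mathcal{I}}$; and (iii) by \emph{Proposition 2}, the quantities ${\rm{rank}}\{{\bf H}_{11}{\bf V}'\}$, $m({\bf V}',{\bf W}')$, $n({\bf V}',{\bf W}')$, and the spans that determine $d_s^2$ all coincide with their $({\bf V},{\bf W})$-counterparts, so that $(d_s^1({\bf V}',{\bf W}'),d_s^2({\bf V}',{\bf W}')) = (d_s^1({\bf V},{\bf W}),d_s^2({\bf V},{\bf W}))$.

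The main obstacle will be the trace constraint: we need ${\rm{tr}}\{{\bf W}'{\bf W}'^{H}\} = {\rm{tr}}\{{\bf W}{\bf T}{\bf T}^{H}{\bf W}^{H}\} + \beta^2 P = P$ while the matching condition ${\bf G}_1{\bf V}'={\bf G}_2{\bf W}'(:,1:K_v)$ must not be disturbed, and this matching is not preserved by \emph{independent} rescalings of ${\bf V}'$ and ${\bf W}'$. The plan to resolve this is to exploit the freedom in ${\bf T}$ (for instance, choosing the minimum-norm solution within the affine set $\{{\bf T}: {\bf G}_2{\bf W}{\bf T}={\bf G}_1{\bf V}\}$) together with a common positive rescaling of $({\bf V}',{\bf W}')$, under which the matching condition is invariant. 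Since the formulas of \emph{Proposition 2} depend only on ranks and subspace spans, these magnitude adjustments are S.D.o.F.-preserving, and the construction can be completed so that $({\bf V}',{\bf W}') \in \hat{\mathcal{I}}$ achieves exactly $(d_s^1,d_s^2)$.
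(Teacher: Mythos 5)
Your proposal is correct and follows essentially the same route as the paper: keep ${\bf V}$ fixed, use the span inclusion from $\bar{\mathcal I}_1$ to obtain a matrix ${\bf T}$ with ${\bf G}_1{\bf V}={\bf G}_2{\bf W}{\bf T}$, and replace ${\bf W}$ by a matrix whose leading $K_v$ columns are ${\bf W}{\bf T}$ and whose column span equals that of ${\bf W}$ (the paper appends ${\bf W}{\bf B}^\perp$ where you append $\beta{\bf W}$, with ${\bf B}$ built explicitly via an SVD/pseudo-inverse in two cases). The power-budget wrinkle you flag is immaterial, since by \emph{Proposition 2} the S.D.o.F.\ pair depends only on column spans and is therefore invariant to fixed rescalings; the paper's own construction ignores this normalization as well.
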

\begin{proof}
See Appendix \ref{appC}.
\end{proof}

\begin{corollary}
For any given precoding matrix pair $({\bf V}, {\bf W}) \in \bar{\mathcal I}$, the
achieved S.D.o.F. over the wiretap channel \emph{S}$_1$-\emph{D}$_1$-\emph{E} is
$d_s^1={\rm rank}({\bf H}_{11}{\bf V})$.
\end{corollary}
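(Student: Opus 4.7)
The plan is to derive the claim as an immediate consequence of \emph{Proposition 2} once we substitute the defining constraints of the set $\bar{\mathcal I}$. Recall that Proposition 2 gives
\[
d_s^1({\bf V},{\bf W}) = {\rm rank}\{{\bf H}_{11}{\bf V}\} - m({\bf V},{\bf W}) - n({\bf V},{\bf W}),
\]
so the task reduces to showing that both $m({\bf V},{\bf W})$ and $n({\bf V},{\bf W})$ vanish whenever $({\bf V},{\bf W}) \in \bar{\mathcal I}$.

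First I would handle $m({\bf V},{\bf W})$. By the definition of $\bar{\mathcal I}_1$ in (\ref{eq10a}), we have ${\rm span}({\bf G}_{1}{\bf V}) \subset {\rm span}({\bf G}_{2}{\bf W})$, which means there is no vector in ${\rm span}({\bf G}_{1}{\bf V})$ lying outside ${\rm span}({\bf G}_{2}{\bf W})$. Hence the quotient subspace ${\rm span}({\bf G}_{1}{\bf V})/{\rm span}({\bf G}_{2}{\bf W})$ is trivial and $m({\bf V},{\bf W}) = 0$. Next, for $n({\bf V},{\bf W})$, the defining property of $\bar{\mathcal I}_2$ in (\ref{eq10b}) gives ${\rm span}({\bf H}_{11}{\bf V}) \cap {\rm span}({\bf H}_{12}{\bf W}) = {\bf 0}$, so this intersection has dimension zero and $n({\bf V},{\bf W}) = 0$.

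Substituting these two vanishing terms back into the formula of Proposition 2 yields $d_s^1 = {\rm rank}({\bf H}_{11}{\bf V})$, which is the assertion. There is no real obstacle here: the corollary is essentially a one-line algebraic consequence of the subspace-alignment constraints that define $\bar{\mathcal I}$, the only thing worth emphasizing being the interpretation that under this transmission scheme, neither does ${\bf E}$ receive any eavesdropping dimension outside the jamming subspace, nor does ${\bf D}_1$ lose any message dimension to interference from $S_2$, so the achievable S.D.o.F.\ coincides with the raw rank of the effective message channel ${\bf H}_{11}{\bf V}$.
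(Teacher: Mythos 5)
Your proof is correct and follows essentially the same route as the paper: both arguments reduce to the two defining constraints of $\bar{\mathcal I}$, the only cosmetic difference being that you invoke the formula of \emph{Proposition 2} and set $m({\bf V},{\bf W})=n({\bf V},{\bf W})=0$, while the paper restates the same two facts as $\lim_{P\to\infty} R_e/\log(P)=0$ and $\lim_{P\to\infty} R_d^1/\log(P)={\rm rank}({\bf H}_{11}{\bf V})$ before subtracting. No gap; your version is, if anything, slightly tidier since it reuses Proposition 2 directly.
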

\begin{proof}
Since $({\bf V}, {\bf W}) \in \bar{\mathcal I}$, it holds that
${\rm{span}}({\bf{G}}_{1}{\bf{V}}) \subset {\rm{span}}({\bf{G}}_{2}{\bf{W}})$, which indicates
$\mathop{\lim }\limits_{P \to \infty }  \dfrac{R_e}{{\rm{log}}(P)}=0$.
In addition, ${\rm{span}}({\bf{H}}_{11}{\bf{V}}) \cap {\rm{span}}({\bf{H}}_{12}{\bf{W}})={\bf 0}$,
thus $\mathop{\lim }\limits_{P \to \infty }  \dfrac{R_d^1}{{\rm{log}}(P)}={\rm rank}({\bf H}_{11}{\bf V})$.
So,
\begin{align}
d_s^1=\mathop{\lim }\limits_{P \to \infty }  \dfrac{R_d^1}{{\rm{log}}(P)}-
\mathop{\lim }\limits_{P \to \infty } \dfrac{R_e}{{\rm{log}}(P)}={\rm rank}({\bf H}_{11}{\bf V}). \nonumber
\end{align}
This completes the proof.
\end{proof}

\section{Computation of the  S.D.o.F. Boundary}
The key idea for computing the S.D.o.F. boundary
is to maximize the value of $d_s^2$ for a fixed value of $d_s^1$, say $d_s^1=\hat d_s^1$.
Based on \emph{Corollary 1}, in order to determine
the outer boundary of ${\mathcal{D}}$, we only need to focus on the
set $\hat{\mathcal I}$ (see eq. (\ref{eq00a})).
Further, \emph{Corollary 2} shows that for $({\bf V}, {\bf W}) \in \hat{\mathcal I}$
the achieved S.D.o.F. is $d_s^1={\rm {rank}}\{{\bf H}_{11}{\bf V}\}$.
The problem of interest now is to construct precoding matrices
which satisfy $({\bf V}, {\bf W}) \in \hat{\mathcal I}$, $K_v=\hat d_s^1$, and also leave
a maximum dimension interference-free subspace for \emph{D}$_2$.

For ease of exposition, let $({\bf v}, {\bf w})$\footnote{The precoding vector pairs $({\bf v},{\bf w})$
we consider in the construction of $({\bf V}, {\bf{W}})$
are linear independent of each other.} denote the precoding vector pair.
Some observations are in order. First, one can see that when the source message sent by \emph{S}$_1$ lies in the
null space of the eavesdropping channel, even if the pair \emph{S}$_2$-\emph{D}$_2$ communicates, their interference cannot degrade
any further the eavesdropping channel because the eavesdropper already receives nothing; in those cases we may take ${\bf w}={\bf 0}$.
Second, according to \emph{Corollary 2}, for any precoding matrix pairs $({\bf V}, {\bf W}) \in \hat{\mathcal I}$,
the achieved S.D.o.F. $d_s^1={\rm {rank}}\{{\bf H}_{11}{\bf V}\}$. Thus, a greater value of
$d_s^1$ can be achieved by including more linear independent precoding vector pairs in $({\bf V}, {\bf W})$.
Third, the maximum number of linear precoding vector pairs is determined by (\ref{eq10b}),
which requires that
\begin{align}
{\rm{dim}}\{{\rm{span}}({\bf{H}}_{11}{\bf{V}})\}+{\rm{dim}}\{ {\rm{span}}({\bf{H}}_{12}{\bf{W}})\}\le N_d^1. \label{eqdim}
\end{align}
Fourth, the maximum dimension of the interference-free subspace at \emph{D}$_2$ depends on whether \emph{D}$_2$
experiences interference from \emph{S}$_1$.
So, in the following subsections, we will divide the set satisfying ${\bf{G}}_{1}{\bf v} ={\bf{G}}_{2}{\bf w}$ into six subsets,
according to whether the source message from \emph{S}$_1$ lies in the
null space of the eavesdropping channel, whether the source message from \emph{S}$_2$
has interference on \emph{D}$_1$, and whether the source message from \emph{S}$_1$
has interference on \emph{D}$_2$. Accordingly, we characterize the precoding vector pairs in each subset
with the signal dimension triplet $(a,b,c)$, where
$a$ and $b$ denote the number of signal dimensions
we respectively need at \emph{D}$_1$ and \emph{S}$_2$,
and $c$ denotes the signal dimension penalty at \emph{D}$_2$, for obtaining one S.D.o.F.
over the wiretap channel \emph{S}$_1$-\emph{D}$_1$-\emph{E}.
In particular, $a \triangleq {\rm{rank}}\{{\bf{H}}_{11}{\bf{v}}\}+{\rm{rank}}\{{\bf{H}}_{12}{\bf{w}}\}$;
$b \triangleq {\rm{rank}}\{\bf w\}$; $c \triangleq  {\rm{rank}}\{{\bf{H}}_{21}{\bf{v}}\}$. Then,
\begin{enumerate}
\item if the message signal sent by \emph{S}$_1$ spreads within the null space of the eavesdropping channel,
the message signal sent from {\emph S}$_1$ is secure even without the help of \emph{S}$_2$, thus $b=0$, $a=1$;
otherwise, $b=1$.
\item if the message signal sent by \emph{S}$_2$ interferes with \emph{D}$_1$, we need
at least two signal dimensions at \emph{D}$_1$ in order to tell the message signal sent by \emph{S}$_1$
apart from that sent by \emph{S}$_2$, which means that $a=2$; otherwise, $a=1$.
\item if the message signal sent by \emph{S}$_1$ interferes with \emph{D}$_2$,
the signal dimension penalty at \emph{D}$_2$ is one, thus $c=1$; otherwise, $c=0$.
\end{enumerate}
Please refer to Table I for the triplet $(a,b,c)$ of the precoding vector pair from each subset.
Based on this triplet $(a,b,c)$, in this section,
we will analyze the Single-User points \emph{SU}1 and \emph{SU}2,  
the strict S.D.o.F. region boundary, and the ending points of strict S.D.o.F. region boundary
\emph{E}1 and \emph{E}2.

\subsection{Aligned signal subspace decomposition}
In this subsection, we divide the set satisfying ${\bf{G}}_{1}{\bf v} ={\bf{G}}_{2}{\bf w}$
into six subsets, i.e., $Sub_{\rm I}$,..., $Sub_{\rm VI}$, and determine the
number of linear independent precoding vector pairs that should be considered
in each subset, i.e., $d_{\rm I}$,...,$d_{\rm VI}$, respectively.

\begin{table*}[!t]
\renewcommand{\arraystretch}{1.6}
\centering
\caption{The triplet $(a,b,c)$ corresponding to the precoding vector pair from each subset and
the number of linear independent precoding vector pairs that should be considered in each subset}
\begin{tabular}{|c|c|c|}
\hline
\textbf{subsets}& {\textrm{(a,b,c)}}&\textbf{maximum number of linear independent precoding vector pairs $({\bf v}, {\bf w})$} \\
\hline
{$Sub_{\rm I}$}  & $(1,0,0)$ & $d_{\rm I}=(N_s^1-N_e-N_d^2)^+$ \\        
\hline
{$Sub_{\rm II}$}  & $(1,0,1)$ & $d_{\rm II}= \min \{N_d^2, (N_s^1-N_e)^+\}$ \\        
\hline
{$Sub_{\rm III}$}  & $(1,1,0)$ & $d_{\rm III}=(\min \{(N_s^1-N_d^2)^+,N_e\}+\min \{(N_s^2-N_d^1)^+,N_e\}-N_e)^+$ \\        
\hline
{$Sub_{\rm IV}$}  & $(1,1,1)$ & $d_{\rm IV}=(\min \{N_s^1,N_e\}+\min \{(N_s^2-N_d^1)^+,N_e\}-N_e)^+ -d_{\rm III}$ \\        
\hline
{$Sub_{\rm V}$}  & $(2,1,0)$ & $d_{\rm V}=(\min \{(N_s^1-N_d^2)^+,N_e\}+\min \{N_s^2,N_e\}-N_e)^+ -d_{\rm III}$  \\
\hline
{$Sub_{\rm VI}$}  & $(2,1,1)$ & $d_{\rm VI}=(\min \{N_s^1,N_e\}+\min \{N_s^2,N_e\}-N_e)^+-(d_{\rm III}+d_{\rm IV}+d_{\rm V})$    \\
\hline
\end{tabular}
\end{table*}

I) \emph{The message signal sent by { S$_1$} spreads within the null space of the eavesdropping channel, and
does not interfere with {D$_2$}.} That is, the precoding vector pairs in $Sub_{\rm I}$ should satisfy
\begin{subequations}
\begin{align}
&{\bf{G}}_{1}{\bf v}={\bf 0},  \label{eq51a}\\
&{\bf{H}}_{21}{\bf v}={\bf 0}. \label{eq51b}
\end{align}
\end{subequations}
Further, it holds that ${\bf{G}}_{2}{\bf w}={\bf{G}}_{1}{\bf v}={ \bf 0}$.
The case where ${\bf{G}}_{1}{\bf v}={\bf{G}}_{2}{\bf w}=0$ and ${\bf w} \ne 0$ is not considered here,
because even if the pair \emph{S}$_2$-\emph{D}$_2$ communicates, their interference cannot degrade any further the eavesdropping channel.
So we will consider ${\bf w}={\bf 0}$ for simplicity.
Substituting ${\bf v}={\bf \Gamma}({\bf G}_{1}){\bf x}$ into (\ref{eq51b}), with $\bf x$ being any vectors
with appropriate length, we arrive at ${\bf{H}}_{21}{\bf \Gamma}({\bf G}_{1}){\bf x}={\bf 0}$,
which is equivalent to ${\bf x}={\bf \Gamma}({\bf{H}}_{21}{\bf \Gamma}({\bf G}_{1})){\bf y}$, with
$\bf y$ being any vectors with appropriate length. Therefore,
the formula of $\bf v$ in $Sub_{\rm I}$ is
\begin{align}
{\bf v}={\bf \Gamma}({\bf G}_{1}){\bf \Gamma}({\bf{H}}_{21}{\bf \Gamma}({\bf G}_{1})){\bf z},  \label{eq52a1}
\end{align}
with $\bf z$ being any nonzero vectors with appropriate length.
In addition, since all the channel matrices are assumed to be full rank, it holds that
\begin{align}
d_{\rm I} \le {\rm {dim}}\{{\rm {null}}({{\bf H}_{21} {\bf \Gamma}}({\bf{G}}_{1}))\}= (N_s^1-N_e-N_d^2)^+. \label{eq52}
\end{align}

II) \emph{The message signal sent by { S$_1$} spreads within the null space of the eavesdropping channel,
but does interfere with { D$_2$}.} That is, the vectors in $Sub_{\rm II}$ should satisfy
\begin{subequations}
\begin{align}
&{\bf{G}}_{1}{\bf v}={\bf 0}, \label{eq53a} \\
&{\bf{H}}_{21}{\bf v}\ne {\bf 0}. \label{eq53b}
\end{align}
\end{subequations}
Here again, we will consider ${\bf w}={\bf 0}$ for simplicity.
On combining (\ref{eq51a})-(\ref{eq51b}) with (\ref{eq53a})-(\ref{eq53b}), it holds that
\begin{align}
Sub_{\rm I} \cup Sub_{\rm II}=\{({\bf v},{\bf w})|{\bf{G}}_{1}{\bf v}={\bf 0}, {\bf w}={\bf 0}\}.
\label{eq54a3}
\end{align}
So, the linear independent vectors we can choose from $Sub_{\rm I}$ and $Sub_{\rm II}$
should be no greater than ${\rm {dim}} \{{\rm {null}}({\bf{G}}_{1})\}$. That is,
\begin{align}
d_{\rm II} +d_{\rm I}\le  (N_s^1-N_e)^+. \label{eq54}   
\end{align}

III) \emph{The message signal sent by { S$_1$} does not spread within the null space of the eavesdropping channel.
The message signals sent by { S$_1$} and { S$_2$} do not interfere with { D$_2$} and { D$_1$}, respectively.}
That is, the precoding vector pairs in $Sub_{\rm III}$ should satisfy
\begin{subequations}
\begin{align}
&{\bf{H}}_{12}{\bf w}={\bf 0}, \label{eq55a} \\
&{\bf{H}}_{21}{\bf v}= {\bf 0}, \label{eq55b}\\
&{\bf{G}}_{1}{\bf v} ={\bf{G}}_{2}{\bf w} \ne {\bf 0}. \label{eq55c}
\end{align}
\end{subequations}
Substituting ${\bf v}={\bf \Gamma}({\bf H}_{21}){\bf x}$ and ${\bf w}={\bf \Gamma}({\bf H}_{12}){\bf y}$
into (\ref{eq55c}), we arrive at
\begin{align}
{\bf{G}}_{1}{\bf \Gamma}({\bf H}_{21}){\bf x} = {\bf{G}}_{2}{\bf \Gamma}({\bf H}_{12}){\bf y} \ne {\bf 0}. \label{eq56}
\end{align}
Consider the decomposition
\begin{align}
&{\rm {GSVD}}({\bf{G}}_{1}{\bf \Gamma}({\bf H}_{21}), {\bf{G}}_{2}{\bf \Gamma}({\bf H}_{12}); N_e, \hat N_s^1, \hat N_s^2) \nonumber\\
&\quad \quad =(\hat {\bf\Psi}_1,\hat{\bf\Psi}_2,\hat{\bf \Lambda}_1,\hat{\bf \Lambda}_2,\hat{\bf X},\hat k,\hat r,\hat s,\hat p), \nonumber
\end{align}
where $\hat N_s^1 \triangleq (N_s^1-N_d^2)^+$ and $\hat N_s^2 \triangleq(N_s^2-N_d^1)^+$.
Applying \emph{Proposition 1}, we
can obtain the number of linearly independent vectors $\bf v$ satisfying (\ref{eq56}), i.e.,
\begin{align}
\hat d_{\rm III} \triangleq \hat s+{\rm {dim}} \{{\rm {null}}({\bf{G}}_{1}{\bf \Gamma}({\bf H}_{21})) \}. \nonumber 
\end{align}
Since ${\rm {null}}({\bf{G}}_{1}{\bf \Gamma}({\bf H}_{21}))={\rm {null}}({{\bf H}_{21} {\bf \Gamma}}({\bf{G}}_{1})) $,
the basis of ${\rm {null}}({\bf{G}}_{1}{\bf \Gamma}({\bf H}_{21}))$ also spans the solution space of $\bf v$ in $Sub_{\rm I}$. Thus,
\begin{align}
d_{\rm III} +d_{\rm I}\le \hat d_{\rm III}=\hat s+ (N_s^1-N_e-N_d^2)^+,  \label{eq57a2}
\end{align}

IV) \emph{The message signal sent by {S$_1$} does not spread within the null space of the eavesdropping channel. 
The message signal sent by {S$_2$} does not interfere with {D$_1$}, but the message signal
sent by {S$_1$} interferes with {D$_2$}.}
That is, the precoding vector pairs in $Sub_{\rm IV}$ should satisfy
\begin{subequations}
\begin{align}
&{\bf{H}}_{12}{\bf w}={\bf 0},  \label{eq58a}\\
&{\bf{H}}_{21}{\bf v} \ne {\bf 0}, \label{eq58b} \\
&{\bf{G}}_{1}{\bf v} ={\bf{G}}_{2}{\bf w} \ne {\bf 0}. \label{eq58c}
\end{align}
\end{subequations}
Substituting ${\bf w}={\bf \Gamma}({\bf{H}}_{12}){\bf y}$
into (\ref{eq58c}), we get
\begin{align}
{\bf{G}}_{1}{\bf v} = {\bf{G}}_{2}{\bf \Gamma}({\bf{H}}_{12}){\bf y} \ne {\bf 0}. \label{eq59}
\end{align}
Consider the decomposition
\begin{align}
&{\rm {GSVD}}({\bf{G}}_{1}, {\bf{G}}_{2}{\bf \Gamma}({\bf H}_{12}); N_e, N_s^1,\hat N_s^2) \nonumber \\
&\quad \quad =(\bar {\bf\Psi}_1,\bar{\bf\Psi}_2,\bar{\bf \Lambda}_1,\bar{\bf \Lambda}_2,\bar{\bf X},\bar k,\bar r,\bar s,\bar p). \nonumber
\end{align}
Applying \emph{Proposition 1} we
can obtain the number of linearly independent vectors $\bf v$ satisfying (\ref{eq59}), i.e.,
\begin{align}
&\hat d_{\rm IV} \triangleq \bar s + {\rm {dim}} \{{\rm {null}}({\bf{G}}_{1}) \}. \nonumber 
\end{align}
On combining (\ref{eq55a})-(\ref{eq55c}) with (\ref{eq58a})-(\ref{eq58c}), it holds that
\begin{align}
Sub_{\rm III} \cup Sub_{\rm IV}=\{({\bf v}, {\bf w})|{\bf{H}}_{12}{\bf w}={\bf 0},
{\bf{G}}_{1}{\bf v} ={\bf{G}}_{2}{\bf w} \ne {\bf 0}\} \nonumber
\end{align}
In addition, the basis of ${\rm {null}}({\bf{G}}_{1})$ also spans the solution space
of $\bf v$ in $Sub_{\rm I}\cup Sub_{\rm II}$. Therefore,
\begin{align}
d_{\rm IV}+d_{\rm III}+ d_{\rm II}+d_{\rm I} \le \hat d_{\rm IV} =\bar s+ (N_s^1-N_e)^+. \label{eq61}
\end{align}

V) \emph{The message signal sent by {S$_1$} does not spread within the null space of the eavesdropping channel.
The message signal sent by {S$_2$} interferes with {D$_1$}, but the message signal
sent by {S$_1$} does not interfere with {D$_2$}.}
That is, the precoding vector pairs in $Sub_{\rm V}$ should satisfy
\begin{subequations}
\begin{align}
&{\bf{H}}_{12}{\bf w}\ne {\bf 0},  \label{eq61a}\\
&{\bf{H}}_{21}{\bf v} = {\bf 0},  \label{eq61b}\\
&{\bf{G}}_{1}{\bf v} ={\bf{G}}_{2}{\bf w} \ne {\bf 0}. \label{eq61c}
\end{align}
\end{subequations}
Substituting  ${\bf v}={\bf \Gamma}({\bf{H}}_{21}){\bf x}$ into (\ref{eq61c}), we obtain
\begin{align}
{\bf{G}}_{1}{\bf \Gamma}({\bf{H}}_{21}){\bf x}= {\bf{G}}_{2}{\bf w} \ne {\bf 0}. \label{eq62}
\end{align}
Consider the decomposition
\begin{align}
&{\rm {GSVD}}({\bf{G}}_{1}{\bf \Gamma}({\bf H}_{21}), {\bf{G}}_{2}; N_e, \hat N_s^1, N_s^2) \nonumber \\
&\quad \quad =(\breve {\bf\Psi}_1,\breve{\bf\Psi}_2,\breve{\bf \Lambda}_1,\breve{\bf \Lambda}_2,\breve{\bf X},\breve k,\breve r,\breve s,\breve p). \nonumber
\end{align}
Applying \emph{Proposition 1}, we can obtain the number of linearly independent vectors $\bf v$ satisfying
(\ref{eq62}), i.e.,
\begin{align}
\hat d_{\rm V} \triangleq  \breve s+{\rm {dim}} \{{\rm {null}}({\bf{G}}_{1}{\bf \Gamma}({\bf H}_{21})) \}.\nonumber  
\end{align}
On combining (\ref{eq55a})-(\ref{eq55c}) with (\ref{eq61a})-(\ref{eq61c}), it holds that
\begin{align}
Sub_{\rm III} \cup Sub_{\rm V}=\{({\bf v}, {\bf w})|{\bf{H}}_{21}{\bf v} = {\bf 0},{\bf{G}}_{1}{\bf v} ={\bf{G}}_{2}{\bf w} \ne {\bf 0}\} \nonumber
\end{align}
In addition, the basis of ${\rm {null}}({\bf{G}}_{1}{\bf \Gamma}({\bf H}_{21}))$ also spans the solution space of $\bf v$ in $Sub_{\rm I}$. Therefore,
\begin{align}
d_{\rm V}+d_{\rm III}+d_{\rm I} \le \hat d_{\rm V}=\breve s+ (N_s^1-N_e-N_d^2)^+. \label{eq64a1}
\end{align}

VI) \emph{The message signal sent by {S$_1$} does not spread within the null space of the eavesdropping channel.
The message signals sent by {S$_2$} and {S$_1$} interfere with {D$_1$} and {D$_2$}, respectively.}
That is, the precoding vector pairs in $Sub_{\rm VI}$ should satisfy
\begin{subequations}
\begin{align}
&{\bf{H}}_{12}{\bf w} \ne {\bf 0},  \label{eq65a}\\
&{\bf{H}}_{21}{\bf v} \ne {\bf 0}, \label{eq65b}\\
&{\bf{G}}_{1}{\bf v} ={\bf{G}}_{2}{\bf w} \ne {\bf 0}. \label{eq65c}
\end{align}
\end{subequations}
Consider the decomposition
\begin{align}
&{\rm {GSVD}}({\bf{G}}_{1}, {\bf{G}}_{2}; N_e, N_s^1, N_s^2) =(\tilde {\bf\Psi}_1,\tilde{\bf\Psi}_2,\tilde{\bf \Lambda}_1,\tilde{\bf \Lambda}_2,\tilde{\bf X},\tilde k,\tilde r,\tilde s,\tilde p). \nonumber
\end{align}
According to \emph{Proposition 1}, we can obtain the number of linearly independent vectors $\bf v$ satisfying (\ref{eq65c}), i.e.,
\begin{align}
&d_{s} \triangleq \tilde s+{\rm {dim}} \{{\rm {null}}({\bf{G}}_{1})\}. \nonumber 
\end{align}
On combining (\ref{eq65a})-(\ref{eq65c}) with (\ref{eq55a})-(\ref{eq55c}),
(\ref{eq58a})-(\ref{eq58c}) and (\ref{eq61a})-(\ref{eq61c}), it holds that
$Sub_{\rm III} \cup Sub_{\rm IV}\cup Sub_{\rm V} \cup Sub_{\rm VI}=\{({\bf v}, {\bf w})|{\bf{G}}_{1}{\bf v} ={\bf{G}}_{2}{\bf w} \ne {\bf 0}\} $.
In addition, the basis of ${\rm {null}}({\bf{G}}_{1})$ also spans the solution space of $\bf v$ in $Sub_{\rm I}\cup Sub_{\rm II}$. Thus,
\begin{align}
d_{\rm VI}+d_{\rm V} +d_{\rm IV}+d_{\rm III}+d_{\rm II}+d_{\rm I} \le d_s. \label{eq65}
\end{align}

We should note that with all three variables smaller than the corresponding variables of other triplets,
the precoding vector pair from $Sub_{ \rm I}$ has the potential to achieve a greater S.D.o.F. than the others,
and so it has the highest priority in the construction of $({\bf V}, {\bf W})$.
Similarly, the precoding vector pair from $Sub_{\rm IV}$ has lower priority than that
one from $Sub_{\rm I} \cup Sub_{\rm II} \cup Sub_{\rm III}$; the precoding vector pair from
$Sub_{\rm V}$ has lower priority than that one from $Sub_{\rm I} \cup Sub_{\rm III}$;
and the precoding vector pair from $Sub_{ \rm VI}$ has the lowest priority.
Therefore, all the equalities in (\ref{eq52}), (\ref{eq54}), (\ref{eq57a2}), (\ref{eq61}), (\ref{eq64a1}) and (\ref{eq65}) hold true.
As a conclusion, the number of linear independent precoding vector pairs that should be considered in each subset is given in Table I.

Correspondingly, in what follows, we give the formulas of $\bf v $ and $\bf w$ we consider in each subset.
Combining the formula of $\bf v $ in $Sub_{\rm I}$, i.e., (\ref{eq52a1}),
and that one in $Sub_{\rm I} \cup Sub_{\rm II}$, i.e., (\ref{eq54a3}),
we obtain the one in $Sub_{\rm II}$, i.e.,
\begin{align}
{\bf v}={\bf \Gamma}({\bf G}_{1}){\bf \Gamma}^\perp({\bf{H}}_{21}{\bf \Gamma}({\bf G}_{1})){\bf z}+
{\bf \Gamma}({\bf G}_{1}){\bf \Gamma}({\bf{H}}_{21}{\bf \Gamma}({\bf G}_{1})){\bf y}.  \nonumber
\end{align}
with $\bf z$ being any nonzero vectors with appropriate length.
Since we want linear independent precoding vectors, the beamforming direction
already considered in the set with higher priority, e.g., $Sub_{\rm I}$, should not be under consideration
in other subsets. Thus, the formula of $\bf v$ in $Sub_{\rm II}$ is
\begin{align}
{\bf v}={\bf \Gamma}({\bf G}_{1}){\bf \Gamma}^\perp({\bf{H}}_{21}{\bf \Gamma}({\bf G}_{1})){\bf z}.  \label{eq54a2}
\end{align}
Similarly, the formulas of $\bf v$ and $\bf w$ in $Sub_{\rm III}$ are, respectively,
\begin{align}
{\bf v}=\hat{\bf \Psi}_{12}\hat{\bf \Lambda}_1^{-1}{\bf z},
{\bf w}=\hat{\bf \Psi}_{22}\hat{\bf \Lambda}_2^{-1}{\bf z}. \label{eq57a3}
\end{align}
The formulas of $\bf v$ and $\bf w$ in $Sub_{\rm IV}$ are, respectively,
\begin{align}
{\bf v}=\bar{\bf \Psi}_{12}\bar{\bf \Lambda}_1^{-1}{\bf z},
{\bf w}=\bar{\bf \Psi}_{22}\bar{\bf \Lambda}_2^{-1}{\bf z}. \label{eq61a2}
\end{align}
The formulas of $\bf v$ and $\bf w$ in $Sub_{\rm V}$ are, respectively,
\begin{align}
{\bf v}=\breve{\bf \Psi}_{12}\breve{\bf \Lambda}_1^{-1}{\bf z},
{\bf w}=\breve{\bf \Psi}_{22}\breve{\bf \Lambda}_2^{-1}{\bf z}. \label{eq64a2}
\end{align}
And the formulas of $\bf v$ and $\bf w$ in $Sub_{\rm VI}$ are, respectively,
\begin{align}
{\bf v}=\tilde{\bf \Psi}_{12}\tilde{\bf \Lambda}_1^{-1}{\bf z},
{\bf w}=\tilde{\bf \Psi}_{22}\tilde{\bf \Lambda}_2^{-1}{\bf z}. \label{eq67a2}
\end{align}
We should note that since ${\bf H}_{21}$ is independent of the channels ${\bf{G}}_{1}$, ${\bf{G}}_{2}$
and ${\bf H}_{12}$, for precoding vector pairs in (\ref{eq61a2})
${\bf H}_{21}{\bf v} \ne 0$ holds true with probability one. Similar argument also applies in the derivation of
the formulas of $\bf v$ and $\bf w$ in $Sub_{\rm V}$ and $Sub_{\rm VI}$.

\subsection{Single-User points {SU}\emph{1($\bar d_s^1$, 0)} and {SU}\emph{2(0, $\bar d_s^2$)}}
A single-user point corresponds to a scenario in which only one source-destination communicates.
Let $\bar d_s^1$ and $\bar d_s^2$ denote the maximum achievable value of $d_s^1$ and $d_s^2$, respectively.

\subsubsection{The single-user point {SU}\emph{1($\bar d_s^1$, 0)}}
In this case, the pair \emph{S}$_2$-\emph{D}$_2$ does not communicate, but \emph{S}$_2$ still transmits, acting as a cooperative jammer
targeting at degrading the eavesdropping channel. In this case, the system model reduces to a wiretap channel with a
cooperative jammer.
Based on \emph{Corollary 1} and \emph{Corollary 2}, we see that
our problem for maximizing $d_s^1$ is including as more precoding vector pairs as possible in $({\bf V},{\bf W})$.
In Table I, we divide the set which satisfies ${\bf G}_{1}{\bf v}={\bf G}_{2}{\bf w}$ into six subsets.
Due to the requirement in (\ref{eqdim}), it holds that
more precoding vector pairs can be included in $({\bf V},{\bf W})$ by
choosing precoding vector pairs from the subsets with smaller $a$.
For example, $a=1$ for $Sub_{\rm IV}$ while $a=2$ for $Sub_{\rm VI}$.
We can select at most $N_d^1$ precoding vector pairs from
$Sub_{\rm IV}$, in which $a=1$, while we can select only $\lfloor N_d^1/2\rfloor$ precoding vector pairs from $Sub_{\rm VI}$,
in which $a=2$. In addition, since the achieved S.D.o.F. is $d_s^1={\rm {rank}}\{{\bf H}_{11}{\bf V}\}$,
a greater value of $d_s^1$ can be achieved with precoding vector pairs from $Sub_{\rm IV}$.
Therefore, in the construction of $({\bf V},{\bf W})$, the precoding vector pairs from the first four subsets have the
same priority, and the precoding vector pairs from the last two subsets have the same priority.
Moreover, a precoding vector pair from the first four subsets has higher priority than that one from
the last two subsets.
If $N_d^1 \le d_{\rm I}+d_{\rm II}+d_{\rm III}+d_{\rm IV}$, we just select
$N_d^1$ precoding vector pairs from $Sub_{\rm I}\cup Sub_{\rm II} \cup Sub_{\rm III} \cup Sub_{\rm IV}$; otherwise,
we first select all the precoding vector pairs in $Sub_{\rm I}\cup Sub_{\rm II} \cup Sub_{\rm III} \cup Sub_{\rm IV}$,
and then we pick $\lfloor \dfrac{N_d^1-(d_{\rm I}+d_{\rm II}+d_{\rm III}+d_{\rm IV})}{2}\rfloor$
precoding vector pairs from $Sub_{\rm V}\cup Sub_{\rm VI}$.

\emph{Example 1:} Consider the case $(N_s^1, N_d^1, N_e)=(6,3,6)$, $(N_s^2, N_d^2)=(6,6)$.
Based on Table I, the maximum number of linear
independent precoding vector pairs in each subset is $d_{\rm I}=0$, $d_{\rm II}=0$, $d_{\rm III}=0$,
$d_{\rm IV}=3$, $d_{\rm V}=0$, $d_{\rm VI}=3$. Since $N_d^1=d_{\rm I}+d_{\rm II}+d_{\rm III}+d_{\rm IV}$, we first select three precoding
vector pairs in $Sub_{\rm IV}$. We cannot pick any more precoding vector pairs without violating (\ref{eqdim}) since in that case the the remaining
signal dimension at \emph{D}$_1$ is $N_d^1- d_{\rm IV}=0$.
Concluding, we can select a total of 3 precoding vector pairs, and based on \emph{Corollary 2},
$\bar d_s^1=3$.

\emph{Example 2:} Consider the case $(N_s^1, N_d^1, N_e)=(6,5,5)$, $(N_s^2, N_d^2)=(6,4)$.
Based on Table I we get that $d_{\rm I}=0$, $d_{\rm II}=1$, $d_{\rm III}=0$,
$d_{\rm IV}=1$, $d_{\rm V}=2$, $d_{\rm VI}=2$. Since $N_d^1> d_{\rm I}+d_{\rm II}+d_{\rm III}+d_{\rm IV}$, we first select all the precoding
vector pairs in $Sub_{\rm II}$ and $Sub_{\rm IV}$, i.e., $({\bf v}_1,{\bf w}_1)$, $({\bf v}_2,{\bf w}_2)$, with ${\bf H}_{12}{\bf w}_1=0$
and ${\bf H}_{12}{\bf w}_2=0$. From the remaining sets
$Sub_{\rm V}$ and $Sub_{\rm VI}$, we can at most pick one pair, i.e., $({\bf v}_3,{\bf w}_3)$. For either $Sub_{\rm V}$ or $Sub_{\rm VI}$,
it holds that ${\bf H}_{12}{\bf w}_3 \ne 0$. Thus, for ${\bf V}=[{\bf v}_1\ {\bf v}_2\ {\bf v}_3]$ and ${\bf W}=[{\bf w}_1\ {\bf w}_2\ {\bf w}_3]$
it holds that ${\rm dim}\{{\rm span}({\bf H}_{11}{\bf V})\}+{\rm dim}\{{\rm span}({\bf H}_{12}{\bf W})\}=3+1=4$.
If we picked another pair, (\ref{eqdim}) would be violated.
Concluding, we can select a total of 3 precoding vector pairs, and based on \emph{Corollary 2}, $\bar d_s^1=3$.

Summarizing, the maximum achievable value $\bar d_s^1$, i.e.,
\begin{align}
\bar d_s^1=\min \{d_{a=1}+d_{a=2}^\star,N_d^1\}, \label{eq69}
\end{align}
where $d_{a=1}=d_{\rm I}+d_{\rm II}+d_{\rm III}+d_{\rm IV}$, and
\begin{align}
&d_{a=2}^\star=\min \{d_{\rm V}+d_{\rm VI}, \lfloor   {(N_d^1-d_{a=1})^+}/{2} \rfloor\}\nonumber \textrm{.}
\end{align}

\emph{Remark 1:}
To gain more insight into $\bar d_s^1$, we give Table II which shows the dependence of
$\bar d_s^1$ on the number of antennas.

\begin{table*}[!t]
\renewcommand{\arraystretch}{1.6}
\centering
\caption{Summary of the closed-form results on $\bar d_s^1$}
\begin{tabular}{|c|c|}
\hline
\textbf{Inequalities on the number of antennas at terminals} & $\bar d_s^1$  \\
\hline
$N_s^1 \ge N_e+N_d^1$  & \\ \cline{1-1}
$N_s^2 \ge N_e+N_d^1$ & $\min \{N_s^1,N_d^1\}$  \\ \cline{1-1}
$2N_d^1+N_e-N_s^2 \le N_s^1 < N_e+N_d^1$ & \\
$N_d^1 < N_s^2 < N_e +N_d^1$ & \\
\hline
$N_d^1+N_e-N_s^2 < N_s^1 <2N_d^1+N_e-N_s^2$ & $N_s^1+N_s^2-(N_d^1+N_e)+\min \{s,\lfloor \frac {2N_d^1+N_e-N_s^1-N_s^2}{2} \rfloor\}$\\
$N_d^1 < N_s^2 < N_e +N_d^1$ & $s=\min\{N_d^1+N_e-N_s^2,N_e\}+\min\{N_s^2,N_e\}-N_e$\\
\hline
$N_e < N_s^1 < N_e +N_d^1$, $N_s^2 \le N_d^1$  & $N_s^1-N_e+\min \{s,\lfloor \frac {N_d^1+N_e-N_s^1}{2} \rfloor\}$, $s=\min\{N_s^2,N_e\}$ \\
\hline
$N_s^1 \le N_d^1+N_e-N_s^2$, $N_d^1 < N_s^2 < N_e +N_d^1 $ & $\min \{s,\lfloor \frac {N_d^1}{2} \rfloor\}$\\ \cline{1-1}
$N_s^1 \le N_e$, $N_s^2 \le N_d^1$ & $s=\min\{N_s^1,N_e\}+\min\{N_s^2,N_e\}-\min\{N_s^1+N_s^2,N_e\}$ \\
\hline
\end{tabular}
\end{table*}

\subsubsection{The single-user point of {SU}\emph{2(0, $\bar d_s^2$)}}
In this case, the wiretap channel \emph{S}$_1$-\emph{D}$_1$-\emph{E} does not work.
For a point-to-point MIMO user, the maximum achievable degrees of freedom equals $\min\{N_s^2, N_d^2\}$. That is,
\begin{align}
\bar d_s^2=\min\{N_s^2, N_d^2\}. \label{eq510}
\end{align}

\subsection{Computation of the strict  S.D.o.F. region boundary}
The key idea for computing the strict  S.D.o.F. boundary
is to maximize the value of $d_s^2$ for a fixed value of $d_s^1$.

Assume that ${\bf V}$ consists of $\hat d_s^1$ columns, among which
$z$ columns come from a subset for which the message
signal sent by \emph{S}$_1$ interferes with \emph{D}$_2$.
Then, \emph{D}$_2$ can at most see a $(N_d^2-z)^+$-dimension interference-free subspace.
Thus,
\begin{align}
\hat d_s^2(z) \le (N_d^2-z)^+ .  \label{eqds21}
\end{align}
In addition, it holds that $\hat d_s^1+{\rm{dim}}\{ {\rm{span}}({\bf{H}}_{12}{\bf{W}})\}\le N_d^1$ due to (\ref{eqdim}). So,
\begin{align}
{\rm rank}\{{\bf W}\}\le (\max \{N_s^2,N_d^1\}-\hat d_s^1)^+. \label{eqds22}
\end{align}
Combining (\ref{eq510}), (\ref{eqds21}) and (\ref{eqds22}), we get the maximum achievable value of $d_s^2$, i.e.,
\begin{align}
\hat d_s^2(z)=\min \{N_s^2,(\max \{N_s^2,N_d^1\}-\hat d_s^1)^+, (N_d^2-z)^+\}. \label{eq72}
\end{align}
Thus, in order to maximize the value of $d_s^2$, we only need to minimize the value of $z$.

According to Table I, the minimum value of $z$ without the constraint $d_s^1=\hat d_s^1$ equals
$(\hat d_s^1-(d_{\rm V}+d_{\rm I}+d_{\rm III}))^+$.
Due to the constraint $d_s^1=\hat d_s^1$ and the fact that $a=2$ in $Sub_{\rm V}$, we have limitations on
the number of pairs that can be selected from $Sub_{\rm V}$.
For example, consider the case $d_{\rm I}+d_{\rm III}=2$, $d_{\rm V}=2$, $N_d^1=3$ and $\hat d_s^1=3$. The minimum
value of $z$ without the constraint $d_s^1=\hat d_s^1=3$ equals $0$, in which case we need at least choose
one pair from $Sub_{\rm V}$. Noting that (\ref{eqdim}) should be satisfied for $({\bf V}, {\bf W})\in \hat{\cal I}$ and $a=2$ in $Sub_{\rm V}$,
if we have picked one pair from $Sub_{\rm V}$, we can then at most pick one more pair from the first four subsets. Thus,
the maximum achievable value of $d_s^1$ equals 2, which violates the constraint $d_s^1=3$.
Due to the constraint $d_s^1=3$ and the fact that $a=2$ in $Sub_{\rm V}$, we cannot select
any pairs from $Sub_{\rm V}$, and so the minimum value of $z$ equals to 1.

Let $x$ and $y$ denote the number of columns which
come from the first four subsets and the last two subsets, respectively. The maximum allowable value of $y$
under the constraint of $d_s^1=\hat d_s^1$ is
\begin{subequations}
\begin{align}
y_{\max} \triangleq &\max_{x, y} \quad  y \nonumber \\
{\rm s.t.} \quad & x+y=\hat d_s^1,  \label{eq71a}\\
& x+2y \le N_d^1, \label{eq71b}\\
& 0 \le x \le d_{\rm I}+d_{\rm II}+d_{\rm III}+d_{\rm IV}, \label{eq71c}\\
& 0 \le y \le d_{\rm V}+d_{\rm VI}. \label{eq71d}
\end{align}
\end{subequations}
Substituting $x=\hat d_s^1-y$ into (\ref{eq71b}), we arrive at $y \le N_d^1-\hat d_s^1$, which
combined with (\ref{eq71c}) and (\ref{eq71d}) gives
\begin{align}
y_{\max} = \min \{N_d^1-\hat d_s^1, d_{\rm V}+d_{\rm VI}, \hat d_s^1\}.  \label{eq100}
\end{align}
Thus, we can select at most $\min \{y_{\max}, d_{\rm V}\}$ precoding vector pairs from $Sub_{\rm V}$.
Therefore, the minimum value of $z$ is,
\begin{align}
z_{\min}(\hat d_s^1)=(\hat d_s^1-(\min \{y_{\max}, d_{\rm V}\}+d_{\rm I}+d_{\rm III}))^+.  \label{eq73}
\end{align}
Substituting (\ref{eq73}) into (\ref{eq72}), we obtain the maximum value of $d_s^2$, i.e.,
\begin{align}
\hat d_s^2=\min \{N_s^2,(\max \{N_s^2,N_d^1\}-\hat d_s^1)^+, (N_d^2-z_{\min}(\hat d_s^1))^+\}. \label{eq74}
\end{align}

\emph{Remark 2}: For any given values of $d_s^1$, we can derive a maximum achievable value of $d_s^2$
based on (\ref{eq74}). Finally, the strict S.D.o.F.
region boundary can be computed based on the following iteration:
\begin{enumerate}
\item Initialize $\hat d_s^1=\bar d_s^1$;
\item Compute $\hat d_s^2$ with (\ref{eq74});
\item Compare $\hat d_s^2$ with $\bar d_s^2$. If $\hat d_s^2<\bar d_s^2$, let $\hat d_s^1=\hat d_s^1-1$ and go to 2);
otherwise, stop and output all the pairs $(\hat d_s^1,\hat d_s^2)$.
\end{enumerate}

\emph{Example 3:}
Let us revisit \emph{Example 2}, for which we obtained
$\bar d_s^1=3$ and $\bar d_s^2=4$, respectively. Initialize $\hat d_s^1$ with $\bar d_s^1=3$.
Substituting $\hat d_s^1=3$ into (\ref{eq74}), we obtain $\hat d_s^2=3$.
Since $\hat d_s^2<\bar d_s^2$, we continue the iteration. Letting $\hat d_s^1=2$ and substituting it into
(\ref{eq74}), we obtain $\hat d_s^2=4$, which equals $\bar d_s^2$.
So, we stop the iteration and output all the S.D.o.F. pairs on the strict  S.D.o.F. region boundary, i.e.,
$(\hat d_s^1,\hat d_s^2)=(3,3)$ and $(\hat d_s^1,\hat d_s^2)=(2,4)$.

\subsection{Ending points of strict  S.D.o.F. region boundary {E}\emph{1($\bar d_s^1$, $\underline{d}_s^2$)}
and {E}\emph{2($ \underline{d}_s^1$, $\bar d_s^2$)}}
As shown in Fig. 2, \emph{E}1 and \emph{E}2 denote the ending points of the {strict S.D.o.F. region boundary}.
In particular, $\underline{d}_s^2$ denotes the maximum achievable value of ${d}_s^2$ under the constraint ${d}_s^1=\bar{d}_s^1$,
and $\underline{d}_s^1$ denotes the maximum achievable value of ${d}_s^1$ under the constraint ${d}_s^2=\bar{d}_s^2$.

\emph{1) The ending point {E}\emph{1($\bar d_s^1$, $\underline{d}_s^2$)}}.
According to (\ref{eq69}), we obtain $\bar d_s^1$ which denotes the maximum achievable value of $d_s^1$.
Substituting $\hat d_s^1=\bar d_s^1$ into (\ref{eq100})-(\ref{eq74}), we arrive at
\begin{align}
\underline{d}_s^2=\min \{N_s^2,(\max \{N_s^2,N_d^1\}-\bar d_s^1)^+, (N_d^2-z_{\min}(\bar d_s^1))^+\}.  \label{eq521}
\end{align}

\emph{2) The ending point {E}\emph{2($ \underline{d}_s^1$, $\bar d_s^2$)}}.
According to the previous analysis
on the single-user point of \emph{SU}{2(0, $\bar d_s^2$)}, we obtain $\bar d_s^2=\min\{N_s^2, N_d^2\}$,
which, combined with (\ref{eq72}), gives
\begin{subequations}
\begin{align}
&\min\{N_s^2, N_d^2\}  \le \max \{N_s^2,N_d^1\}-\underline{d}_s^1,  \label{eq172} \\
&\min\{N_s^2, N_d^2\}  \le N_d^2-z . \label{eq173}
\end{align}
\end{subequations}
In the following, we consider two distinct cases.

(i) For the case of $N_s^2 > N_d^2$, (\ref{eq172}) becomes
\begin{align}
\underline{d}_s^1 \le  \max \{N_s^2,N_d^1\} - N_d^2. \label{eq75}
\end{align}
Besides, (\ref{eq173}) indicates that $z=0$, and thus all of the signal steams
sent by \emph{S}$_1$ should not interfere with \emph{D}$_2$.
That is, $Sub_{\rm II}$, $Sub_{\rm IV}$ and $Sub_{\rm VI}$ are not under consideration.
Applying (\ref{eq69}), we obtain
\begin{align}
\underline{d}_s^1 \le \min\{ d_{\rm I}+d_{\rm III}+\beta^\star, N_d^1\}, \label{eq76}
\end{align}
where$\beta^\star=\min \{d_{\rm V}, \lfloor   {(N_d^1-d_{\rm I}-d_{\rm III})^+}/{2} \rfloor\}$.
Combining (\ref{eq75}) and (\ref{eq76}), we arrive at
\begin{align}
\underline{d}_s^1 = \min \{d_{\rm I}+d_{\rm III}+\beta^\star,\max \{N_s^2,N_d^1\} - N_d^2, N_d^1\}.
\end{align}

(ii) For the case of $N_s^2 \le N_d^2$, (\ref{eq172}) becomes
\begin{align}
&\underline{d}_s^1 \le  \max \{N_s^2,N_d^1\} - N_s^2, \label{eq174}
\end{align}
which indicates that $\underline{d}_s^1=0$ when $N_s^2 \ge N_d^1$.
So, in the following, we only consider the case of $N_s^2 < N_d^1$, where it holds that
$d_{\rm III}=d_{\rm IV}=0$.
In addition, (\ref{eq173}) indicates that
$z  \le N_d^2- N_s^2.  $ 
Therefore, $\xi = \min\{d_{\rm VI}, (N_d^2- N_s^2-d_{\rm II})^+\} +d_{\rm V}$, 
where $\xi$ denotes the maximum number of precoding vector pairs that can be chosen from
$Sub_{\rm V}$ and $Sub_{\rm VI}$.
Applying (\ref{eq69}), we get
\begin{align}
\underline{d}_s^1 \le \min\{ d_{\rm I}+\hat d_{\rm II}+\xi^\star, N_d^1\}, \label{eq177}
\end{align}
where $\hat d_{\rm II}=\min \{ N_d^2- N_s^2 ,d_{\rm II}\}$, and
\begin{subequations}
\begin{align}
&\xi^\star=\min \{\xi, \lfloor   {(N_d^1-d_{\rm I}-\hat d_{\rm II})^+}/{2} \rfloor\}\nonumber \textrm{.}
\end{align}
\end{subequations}
Combining (\ref{eq174}) and (\ref{eq177}), we arrive at
\begin{align}
\underline{d}_s^1 = \min \{d_{\rm I}+\hat d_{\rm II}+\xi^\star,\max \{N_s^2,N_d^1\} - N_s^2\}.
\end{align}
We should note that this expression also applies to the case of $N_s^2 \ge N_d^1$, where $\underline{d}_s^1 =0$.

Summarizing the above two cases, we arrive at
\begin{align}
\underline{d}_s^1 = \left\{ {\begin{array}{*{20}{c}}
{\min \{ d_{\rm I} + d_{\rm III} + \beta^ \star, \eta - N_d^2,N_d^1\} }, {\rm {if}}\ {N_s^2 > N_d^2}\\
\min \{d_{\rm I}+\hat d_{\rm II}+\xi^\star,\eta - N_s^2\},{\rm {if}}\ {N_s^2 \le N_d^2}
\end{array}} \right. \label{eq522}
\end{align}
where $\eta=\max \{N_s^2,N_d^1\}$.

\section{Construction of Precoding Matrices Which Achieve the Point on the S.D.o.F. Region Boundary}
\begin{table}[!htp]
\caption{An algorithm for constructing $({\bf V}, {\bf W})$
which achieve $(\hat d_s^1, \hat d_s^2)$ on the  S.D.o.F. region boundary}
 \begin{tabular}{p{0.95\linewidth}}
  \hline
1. Initialize $u=\min\{\hat d_s^1, \min \{y_{\max}, d_{\rm V}\}+d_{\rm I}+d_{\rm III}\}$,
$t=\hat d_s^1-u$;
 \\\\
2. $({\bf V}_{\rm{o}},{\bf W}_{\rm{o}})\leftarrow$ select $u$ precoding vector pairs from $Sub_{\rm{o}}$; \\
3. $({\bf V}_{\rm{e}},{\bf W}_{\rm{e}})\leftarrow$ select $t$ precoding vector pairs from $Sub_{\rm{e}}$; \\
4. ${\bf V}\leftarrow [{\bf V}_{\rm{o}} \ {\bf V}_{\rm{e}}]$; \\
5. ${\bf W}_1\leftarrow [{\bf W}_{\rm{o}} \ {\bf W}_{\rm{e}}]$; \\ \\
6. Let $\breve d_s^2=\hat d_s^2- {\rm {rank}}({\bf W}_1)$;\\
7. \textbf{if $\breve d_s^2>0$} \\
8. \quad Let $\tilde d_s^2= \min \{\breve d_s^2, (N_s^2-N_d^1)^+\}$; \\
9. \quad ${\bf W}_2 \leftarrow {\bf A}(:, 1:\tilde d_s^2)$, where ${\bf A}={\bf \Gamma}({\bf H}_{12})$;  \\
10. \quad Do the singular value decomposition (SVD) ${\bf H}_{22}={\bf USR}^H$;\\
11. \quad ${\bf W} \leftarrow [{\bf W}_1\ {\bf W}_2\ {\bf R}(:,1:\breve d_s^2-\tilde d_s^2)]$; \\
12. \textbf{else}    \\
13. \quad ${\bf W}\leftarrow {\bf W}_1$;  \\
14. \textbf{end}
\\\\
15. \textbf{Output:} $({\bf V}, {\bf W})$.
\\

\hline
\end{tabular}
\end{table}

According to Section V. C, by carefully choosing
$({\bf v}, {\bf w})$ we are able
to construct precoding matrix pairs $({\bf V}, {\bf W})$ which achieve the S.D.o.F. pairs on the  S.D.o.F. region boundary.
In particular, by selecting $u=\min\{\hat d_s^1, \min \{y_{\max}, d_{\rm V}\}+d_{\rm I}+d_{\rm III}\}$ pairs from
$Sub_{\rm{o}}=Sub_{\rm I}\cup Sub_{\rm III} \cup Sub_{\rm V}$
and $t=\hat d_s^1-u$ pairs from $Sub_{\rm{e}}=Sub_{\rm II}\cup Sub_{\rm IV} \cup Sub_{\rm VI}$,
subject to the number of pairs selected from $Sub_{\rm V} \cup Sub_{\rm VI}$ being no greater than $y_{\max}$, we have completed
the construction of precoding matrices $({\bf V}, {\bf W}(:, 1:K_v)) \in \hat{\mathcal I}$.
This construction satisfies $d_s^1=\hat d_s^1$ and also leaves a maximum dimension, i.e., $d_s^2=\hat d_s^2$ (see eq. (\ref{eq74})),
interference-free subspace for \emph{D}$_2$.
Further, if $\hat d_s^2 \le {\rm {rank}}({\bf W}(:, 1:K_v))$,
\emph{S}$_2$ does not need to add any beamforming vectors, and the S.D.o.F. of $\hat d_s^2$ is achieved.
In this case, $K_w$ equals the number of nonzero columns of ${\bf W}(:, 1:K_v)$. If $\hat d_s^2 > {\rm {rank}}({\bf W}(:, 1:K_v))$,
\emph{S}$_2$ can add $\breve d_s^2=\hat d_s^2- {\rm {rank}}({\bf W}(:, 1:K_v))$ columns to its precoding matrix without
violating any constraints of $\hat{\cal I}$ and also achieves an S.D.o.F. of $\hat d_s^2$.
In particular, by adding the first $\tilde d_s^2 = \min \{\breve d_s^2, (N_s^2-N_d^1)^+\}$ columns of ${\bf \Gamma}({\bf H}_{12})$ and
the first $\breve d_s^2- \tilde d_s^2$ columns of $\bf R$ as the other beamforming vectors
at \emph{S}$_2$, we complete the construction of the precoding matrices $({\bf V}, {\bf W})$. In this case $K_w=\hat d_s^2$.
Here $\bf R$ is obtained with the singular value decomposition (SVD) ${\bf H}_{22}={\bf USR}^H$.
By this SVD the channel ${\bf H}_{22}$ is decomposed into several parallel sub-channels, and
the first $\breve d_s^2- \tilde d_s^2$ columns of $\bf R$ correspond the ones
which are of better channel quality than the others.

\emph{Example 4:}
Let us revisit \emph{Example 3}, in which we obtained
an S.D.o.F. pair $(\hat d_s^1,\hat d_s^2)=(2,4)$ on the strict S.D.o.F. region boundary.
According to Section V. C, at this boundary point, $y_{\max}=2$ and $z_{\min}=0$.
Since $u=2$, $d_{\rm I}=d_{\rm III}=0$ and $d_{\rm V}=2$, we first select two precoding
vector pairs in $Sub_{\rm V}$, i.e., $({\bf v}_1,{\bf w}_1)$ and $({\bf v}_2,{\bf w}_2)$,
with ${\bf H}_{21}{\bf v}_1=0$, ${\bf H}_{21}{\bf v}_2=0$, ${\bf H}_{12}{\bf w}_1 \ne 0$ and ${\bf H}_{12}{\bf w}_2 \ne 0$.
From the remaining sets we do not pick any pairs since $t=0$.
So far, we have finished the construction of ${\bf V}$ and $ {\bf W}(:, 1:K_v)$, i.e., $[{\bf v}_1\ {\bf v}_2]$ and $[{\bf w}_1\ {\bf w}_2]$.
Since $\breve d_s^2=\hat d_s^2- {\rm {rank}}({\bf W}(:, 1:K_v))=2 >0$,
we further add $\tilde d_s^2= \min \{\breve d_s^2, (N_s^2-N_d^1)^+\}=1$ column of ${\bf \Gamma}({\bf H}_{12})$,
i.e., ${\bf w}_3$, with ${\bf H}_{12}{\bf w}_3=0$,
and $\breve d_s^2-\tilde d_s^2=1$ column of $\bf R$, i.e., ${\bf w}_4$, with ${\bf H}_{22}{\bf w}_4 \ne 0$,
as the other beamforming vectors at \emph{S}$_2$.
Since ${\bf H}_{11}{\bf v}_i \ne 0$, ${\bf H}_{22}{\bf w}_i \ne 0$ and ${\bf H}_{12}{\bf w}_4 \ne 0$
hold true with probability one,
for ${\bf V}=[{\bf v}_1\ {\bf v}_2]$ and ${\bf W}=[{\bf w}_1\ {\bf w}_2\ {\bf w}_3\ {\bf w}_4]$
it holds that ${\rm dim}\{{\rm span}({\bf H}_{11}{\bf V})\}+{\rm dim}\{{\rm span}({\bf H}_{12}{\bf W})\}=2+3=5$
and ${\rm dim}\{{\rm span}({\bf H}_{22}{\bf W})\}+{\rm dim}\{{\rm span}({\bf H}_{21}{\bf V})\}=4+0=4$.
Therefore, the S.D.o.F. pair $(\hat d_s^1,\hat d_s^2)=(2,4)$ is achieved.

Concluding, an algorithm for constructing $({\bf V}, {\bf W})$ is given in TABLE III.
Note that the formulas of ${\bf v}_i$ and ${\bf w}_i$ in $Sub_i$, $i={\rm I},{\rm II},\cdots,{\rm VI}$, are given in
(\ref{eq52a1}), (\ref{eq54a2}), (\ref{eq57a3}), (\ref{eq61a2}), (\ref{eq64a2}) and (\ref{eq67a2}), respectively.

\emph{Remark 3}: In light of (\ref{eq18a}) and (\ref{eq18b}) derived in \emph{Proposition 2},
whenever we find a solution $({\bf V}, {\bf W})$
achieving the S.D.o.F. pair $(\hat d_s^1, \hat d_s^2)$ on the  S.D.o.F. region boundary, we
actually find the solution spaces ${\rm {span}}({\bf V})$ and $ {\rm {span}}({\bf W})$, i.e.,
the precoding matrices $({\bf VA}, {\bf WB})$ also achieve
the S.D.o.F. pair $(\hat d_s^1, \hat d_s^2)$ on the  S.D.o.F. region boundary as long as
$\bf A$ and $\bf B$ are invertible.

\section{Numerical Results}

In this section, we give numerical results to validate our theoretical findings.
For simplicity, we consider a simple semi-symmetric system model, as illustrated in Fig. 3.
In particular, the antenna numbers $N_s^1=N_d^1=N_e \triangleq N_1 $, and $N_s^2=N_d^2 \triangleq N_2 $.
We assume that \emph{D}$_i$ or \emph{E} is uniformly distributed on a ring of radius $1 \le R \le 10$ (unit: meters) and center located at \emph{S}$_i$.
The source-destination distances or the source-eavesdropper distance
are no greater than the source-source distance. To highlight the effects of
distances, the channel between any transmit-receiver antenna pair is modeled by a
simple line-of-sight channel model including the path loss effect
and a random phase, i.e., $h_{12}=d_{12}^{-c/2}e^{j\theta}$ where $d_{12}$ denotes the distance
between the \emph{S}$_2$ and \emph{D}$_1$, $c=3.5$ is the path loss exponent, $\theta$ is the random phase uniformly distributed
within $[0, 2 \pi)$.
The distances between transmit or receiver antennas at each terminal are assumed
to be much smaller than the source-destination distance or the source-eavesdropper distance,
so the path losses of different transmit-receiver antenna pairs from the same transmit-receiver link
are approximately the same. \emph{S}$_2$ is located at a fixed two-dimensional coordinates
(0,0) (unit: meters), while \emph{S}$_1$ moves from (350,0) to (10,0).
The transmitting power of each source is $P= 0$dBm.
Results are averaged over one hundred thousand independent channel trials.

\begin{figure}[!t]
\centering
\includegraphics[width=3in]{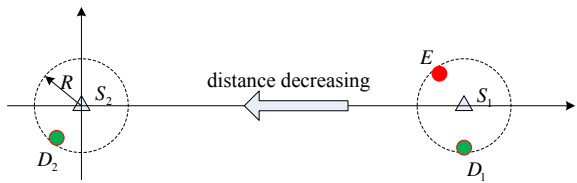}
\DeclareGraphicsExtensions. \caption{Model used for numerical experiments}
\vspace* {-6pt}
\end{figure}

\begin{figure}[!t]
\centering
\includegraphics[width=3in]{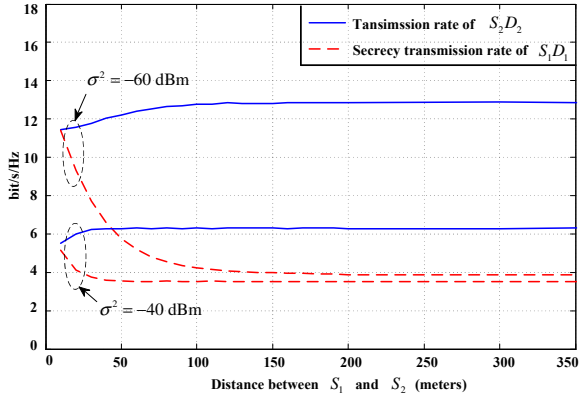}
\DeclareGraphicsExtensions. \caption{Achievable rates versus $S_1$-$S_2$ distance}
\vspace* {-6pt}
\end{figure}

Fig. 4 illustrates the achievable secrecy transmission rate of the user \emph{S}$_1$-\emph{D}$_1$,
and also the achievable transmission rate of the user \emph{S}$_2$-\emph{D}$_2$ for $N_1=4$ and $N_2=2$.
The noise power $\sigma^2=-60$dBm and $\sigma^2=-40$dBm are considered, respectively. According
to (\ref{eq74}), we see that with our proposed cooperative transmission
scheme, the S.D.o.F. pair (1,1) can be achieved. We compute the precoding vectors $\bf v$ and $\bf w$
according to TABLE III, and compute the achievable
transmission rate of each user according to (\ref{eq4}) and (\ref{eq5}), respectively.
It shows that the achievable secrecy transmission rate of \emph{S}$_1$-\emph{D}$_1$ increases monotonically as
\emph{S}$_1$ moves close to \emph{S}$_2$. In contrast, the achievable transmission rate of \emph{S}$_2$-\emph{D}$_2$
decreases with the decreasing of the source-source distance. As compared with the decrease in the transmission rate of
\emph{S}$_2$-\emph{D}$_2$, the increase in the secrecy transmission rate of \emph{S}$_1$-\emph{D}$_1$ is drastic.
Therefore, the network performance benefits when the two users get closer.

\begin{figure}[!t]
\centering
\includegraphics[width=3in]{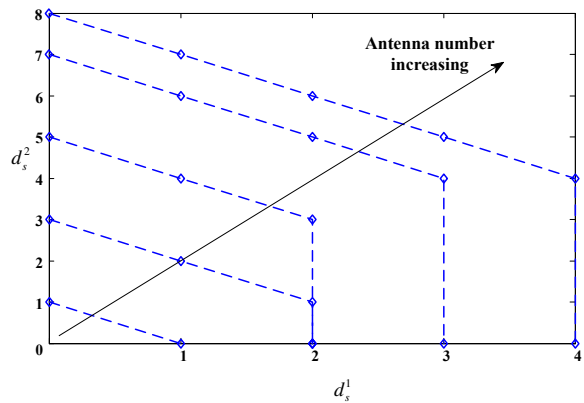}
\DeclareGraphicsExtensions. \caption{Achievable secrecy degrees of freedom region with an increasing number of antennas at \emph{S}$_2$-\emph{D}$_2$}
\vspace* {-6pt}
\end{figure}

Fig. 5 illustrates the achievable secrecy degrees of freedom region versus
different values of $N_2$. Here, we set $N_1=4$ and let $N_2$ vary from 1 to 8.
We compute the achievable secrecy degrees of freedom region according to (\ref{eq74}).
As expected, the secrecy degrees of freedom region expands with an increasing $N_2$.
Note that previous work \cite{Wornell11} shows that for the
classic wiretap channel with no cooperative helpers the
condition to achieve a nonzero S.D.o.F. is $N_s^1 \ge N_e+1$.
Here although $N_s^1 = N_e$,
by exploiting the co-channel interference an S.D.o.F. of $N_s^1$ can be achieved.

\begin{figure}[!t]
\centering
\includegraphics[width=3in]{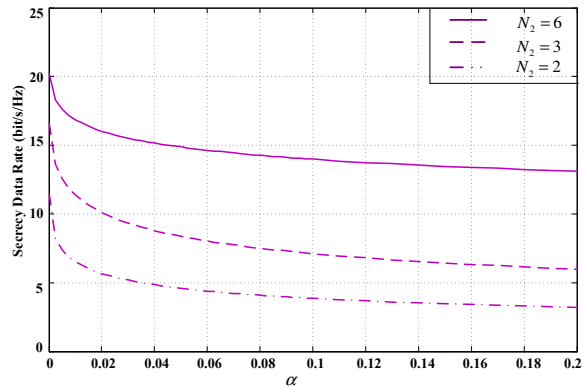}
\DeclareGraphicsExtensions. \caption{Achievable secrecy rate of \emph{S}$_1$-\emph{D}$_1$ versus the uncertainty of the eavesdropper's channels $\alpha$. }
\vspace* {-6pt}
\end{figure}

In practice, while one may have a good estimate of the position of the eavesdropper, an estimate of
the phase of the eavesdropper's channels is more difficult to obtain.
Since the proposed precoding matrix design highly depends on the eavesdropper's
channels, we next examine the secrecy rate performance degradation in the presence of imperfect channel estimate.
In Fig. 6, we plot the achievable secrecy rate with imperfect CSI of the eavesdropper's channels.
Here, we set $N_1=4$ and let $N_2$ vary from 2 to 6. $S_1$ and $S_2$ are located at
(10,0) and (0,0), respectively. The noise power $\sigma^2=-60$dBm.
The channel from \emph{S}$_i$ ($i=1,2$) to \emph{E} is
\begin{align}
{\bf G}_{i}=d_{ei}^{-c/2}\left(\dfrac{1}{\sqrt{1+\alpha}}\bar{\bf G}_{i}+\sqrt{\dfrac{\alpha}{1+\alpha}}\Delta\bar{\bf G}_{i}\right),
\end{align}
where $\alpha$ denotes the channel uncertainty.
$\bar{\bf G}_{i}$ represents the estimated channel part at \emph{S}$_i$.
The entries of $\bar{\bf G}_{i}$ are $e^{j\theta}$ with $\theta$ be a random phase
uniformly distributed within $[0, 2 \pi)$.
$\Delta\bar{\bf G}_{i}\sim \mathcal{CN}(\bf{0},\bf{I})$
represents the Gaussian error channel matrices.
$d_{ei}$ denotes the distance from \emph{S}$_i$.
According to (\ref{eq74}), we see that the S.D.o.F. pairs (1,1), (2,1) and (3,3) can be achieved for the case of $N_2=2$, $N_2=3$ and
$N_2=6$, respectively. For these S.D.o.F. pairs, we construct the precoding matrices $\bf V$ and $\bf W$
according to TABLE III, subject to power being equally allocated between different signal streams.
The achievable secrecy transmission rate is computed according to (\ref{eq4}).
It can be observed that the achievable secrecy rate drops with the
increase of channel uncertainties when
the channel uncertainty $\alpha$ is small. Fortunately, when the number of antennas
$N_2$ increases, this secrecy rate performance degradation is smaller.
On the other hand, on comparing the secrecy transmission rate of \emph{S}$_1$-\emph{D}$_1$ for the case $N_2=2$ with that in Fig. 4,
one can see that the secrecy rate achieved for the case where $\alpha=0.1$ and
\emph{S}$_1$-\emph{S}$_2$ distance of 10 meters, is almost equal to the
secrecy rate achieved for the case where $\alpha=0$ and \emph{S}$_1$-\emph{S}$_2$ distance of 150 meters.
This suggests that in wiretap interference networks, the secrecy rate
degradation due to CSI estimation error can be counteracted by bringing the two users closer together.

\section{Conclusion}
We have examined the maximum achievable secrecy degrees
of freedoms (S.D.o.F.) region of a MIMO two-user wiretap interference channel,
where one user requires confidential connection against an external passive eavesdropper, while the other uses a public connection.
We have addressed analytically the S.D.o.F. pair maximization (component-wise).
Specifically, we have proposed a cooperative secrecy transmission scheme and proven
that its feasible set is sufficient to achieve all the points on the S.D.o.F. region boundary.
For the proposed cooperative secrecy transmission scheme,
we have obtained analytically the maximum achievable
S.D.o.F. region boundary points. We have also constructed the precoding matrices which achieve the S.D.o.F. region boundary.
Our results revealed the connection between
the maximum achievable S.D.o.F. region and the number of antennas, thus
shedding light on how the secrecy rate region 
behaves for different number of antennas.
Numerical results show that the network performance benefits when the two users get closer.
This is interesting. It tells us that in wiretap interference networks, the secrecy rate
degradation due to CSI estimation error can be counteracted by bringing the two users closer together.

\appendices
\section{Proof of \emph{Proposition 1}} \label{appD}
In what follows, we prove that ${\bf{A}}{\bf{v}}={\bf{B}}{\bf{w}}$ holds true
if and only if $\bf v$ and $\bf w$ are given in (\ref{eq80a}) and (\ref{eq80b}),
with ${\bf{y}}_s$, ${\bf{y}}_{s1}$, ${\bf{y}}_{s2}$,
${\bf{y}}_1$ and ${\bf{y}}_2$ being any vectors with appropriate length.
With this result, the first conclusion in \emph{Proposition 1} is a natural extension.
According to the GSVD decomposition,
${\bf{A}}{\bf \Psi}_{12}{\bf \Lambda}_1^{-1}={\bf{B}}{\bf \Psi}_{22}{\bf \Lambda}_2^{-1}={\bf X}_2$.
Thus, ${\bf{A}}{\bf{v}}={\bf{B}}{\bf{w}} $ holds true if ${\bf v}$
and ${\bf w}$ are given by (\ref{eq80a}) and (\ref{eq80b}), respectively.
Next, we prove by contradiction that
${\bf{A}}{\bf{v}}={\bf{B}}{\bf{w}} $ holds true only if
${\bf v} \in {\rm span}({\bf \Phi}_1)$; the argument for $\bf w$ is similar. Assume that there exists a nonzero vector
$\bar {\bf v} \notin {\rm span}({\bf \Phi}_1)$ satisfying ${\bf{A}}\bar{\bf{v}}={\bf{B}}{\bf{w}}$.
Then, ${\bf A}\bar {\bf v} \notin {\rm span}({\bf A}{\bf \Phi}_1)$; otherwise, it holds that
${\bf A}\bar {\bf v} ={\bf A}{\bf \Phi}_1{\bf x}$ which implies $\bar {\bf v} -{\bf \Phi}_1{\bf x}={\bf \Gamma}({\bf A}){\bf y}_1 $,
and so $\bar {\bf v} \in {\rm span}({\bf \Phi}_1)$ which contradicts with the assumption.
However, ${\bf A}\bar {\bf v} \in {\rm {span}}({\bf X}_2)$
due to ${\bf{A}}\bar{\bf{v}}={\bf{B}}{\bf{w}}$. In addition, by the GSVD,
${\rm {span}}({\bf X}_2)={\rm span}({\bf A}{\bf \Phi}_1)$. Thus, ${\bf A}\bar {\bf v} \in  {\rm span}({\bf A}{\bf \Phi}_1)$
and so ${\bf A}\bar {\bf v} \notin {\rm span}({\bf A}{\bf \Phi}_1)$ is contradicted.
This completes the proof of the first conclusion in \emph{Proposition 1}.

According to the GSVD, ${\bf{A}}{\bf{\Psi}}_{13}={\bf 0}$. Thus, ${\rm span}({\bf{\Psi}}_{13}) \subset {\rm span}({\bf \Gamma}({\bf A}))$.
In addition, ${\rm rank}({\bf{\Psi}}_{13})=M-r-s=M-\min \{M,N\}=(M-N)^+$, which indicates that the linear independent vectors in ${\rm span}({\bf{\Psi}}_{13})$ is the same as that in ${\rm span}({\bf \Gamma}({\bf A}))$. So, ${\rm span}({\bf{\Psi}}_{13}) = {\rm span}({\bf \Gamma}({\bf A}))$. Since ${\bf{\Psi}}_1$ is an unitary matrix, it holds that ${\rm span}({\bf{\Psi}}_{12}) \cap {\rm span}({\bf{\Psi}}_{13}) ={\bf 0}$.
Therefore, ${\rm span}({\bf{\Psi}}_{12}) \cap {\rm span}({\bf \Gamma}({\bf A})) ={\bf 0}$, which, combined with (\ref{eq80a}), indicates that
the number of linearly independent vectors $\bf v$ satisfying ${\bf{A}}{\bf{v}}={\bf{B}}{\bf{w}} \ne {\bf 0} $
is $s+{\rm {dim}} \{{\rm {null}}({\bf{A}}) \}$. This completes the proof.

\section{Proof of \emph{Proposition 2}} \label{appA}
Given an arbitrary point $({\bf V},{\bf W})$, with
${\rm{tr}} \{{\bf Q}_v\}=P$ and $ {\rm{tr}} \{{\bf Q}_w\}=P$.
We can respectively rewrite ${\bf Q}_v$ and ${\bf Q}_w$ as
${\bf Q}_v=  P \bar{\bf Q}_v$ and ${\bf Q}_w=  P \bar{\bf Q}_w $,
with ${\rm{tr}} \{\bar{\bf Q}_v\}={\rm{tr}}\{{\bar{\bf Q}_w}\} = 1$.
Correspondingly, (\ref{eq3a}) can be rewritten as
\begin{align}
R_d^1={\rm {log}}|{\bf I}+({\bf I}+  P{\bf H}_{12}\bar {\bf Q}_w{\bf H}_{12}^H)^{-1}  {\bf H}_{11}\bar{\bf Q}_v{\bf H}_{11}^HP|\label{eq20}.
\end{align}

Let ${\bf\Theta}_2={\bf H}_{11}\bar{\bf Q}_v{\bf H}_{11}^H$.
Denoting ${\bf H}_{12}\bar {\bf Q}_w{\bf H}_{12}^H=\left[{\bf U}_{1}\ {\bf U}_{0} \right]
\left[ {\begin{array}{*{20}{c}}
{{{\bf{\Sigma}}_{1}}}&{\bf{0}}\\
{\bf{0}}&{{{\bf{0}}}}
\end{array}} \right]
\left[ {\begin{array}{*{20}{c}}
{{{\bf{U}}_{1}^H}}\\
{{{\bf{U}}_{0}^H}}
\end{array}} \right]$ as the singular value decomposition (SVD),
and substituting it into (\ref{eq20}), we obtain
\begin{align}
R_d^1 = &{\rm {log}}|{\bf I}+{\bf U}_{1}({\bf I}+  P{\bf \Sigma}_1)^{-1}{\bf U}_{1}^H
  {\bf\Theta}_2P+{\bf U}_{0}{\bf U}_{0}^H  {\bf\Theta}_2P|\nonumber \\
=& {\rm {log}}|{\bf I}+{\bf U}_{1}(\frac {\bf I}{P}+  {\bf \Sigma}_1)^{-1}{\bf U}_{1}^H
  {\bf\Theta}_2+{\bf U}_{0}{\bf U}_{0}^H {\bf\Theta}_2 P|. \nonumber
\end{align}
Therefore,
\begin{align}
&\lim_{P \to \infty} {R_d^1}/{{\rm{log}}(P)} \nonumber \\
=&\lim_{P \to \infty}\dfrac{{\rm {log}}|{\bf I}+{\bf U}_{1}(  {\bf \Sigma}_1)^{-1}{\bf U}_{1}^H
  {\bf\Theta}_2+{\bf U}_{0}{\bf U}_{0}^H {\bf\Theta}_2 P|}{{\rm{log}}(P)} \nonumber \\
=&\lim_{P \to \infty}\dfrac{{\rm {log}}|{\bf I}+(\frac{1}{P}{\bf U}_{1}(  {\bf \Sigma}_1)^{-1}{\bf U}_{1}^H
  +{\bf U}_{0}{\bf U}_{0}^H) {\bf\Theta}_2 P|}{{\rm{log}}(P)} \nonumber \\
=&\lim_{P \to \infty}\dfrac{{\rm {log}}|{\bf I}+{\bf U}_{0}{\bf U}_{0}^H {\bf H}_{11}{\bf V}{\bf V}^H{\bf H}_{11}^H|}{{\rm{log}}(P)} \nonumber \\
= & {\rm {rank}}\{{\bf U}_{0}{\bf U}_{0}^H{\bf H}_{11}{\bf V}{\bf V}^H{\bf H}_{11}^H\}  \label{eq21b} \\
= & {\rm{dim}}\{{\rm{span}}({\bf{H}}_{11}{\bf{V}})/{\rm{span}}({\bf{H}}_{12}{\bf{W}})\}\label{eq21}\\
=&{\rm {rank}}\{{\bf{H}}_{11}{\bf V}\}-
{\rm{dim}}\{{\rm{span}}({\bf{H}}_{11}{\bf{V}})\cap{\rm{span}}({\bf{H}}_{12}{\bf{W}})\}.\label{eq23}
\end{align}
where (\ref{eq21b}) comes from the fact that
\begin{align}
\lim_{P \to \infty}\dfrac{{\rm {log}}|{\bf I}+{\bf A}P|}{{\rm{log}}(P)}
=\lim_{P \to \infty}\dfrac {\sum\nolimits_{i = 1}^t {{\rm{log}}(1+{\lambda _i}P)}}{{\rm{log}}(P)}
= {\rm {rank}}\{\bf A\}, \nonumber
\end{align}
with ${\lambda _i}$ and $t$ being the nonzero eigenvalue and the rank of ${\bf A}$.
(\ref{eq21}) comes from the fact that ${\bf U}_{0}{\bf U}_{0}^H$ is the projection matrix of
the subspace ${\rm {span}}({\bf{H}}_{12}{\bf{W}})^\perp$.

Applying similar derivations from (\ref{eq20}) to (\ref{eq21}) yields
\begin{align}
\lim_{P \to \infty} \dfrac{R_d^2}{{\rm{log}}(P)}
={\rm{dim}}\{{\rm{span}}({\bf{H}}_{22}{\bf{W}})/{\rm{span}}({\bf{H}}_{21}{\bf{V}})\}, \label{eq24}\\
\lim_{P \to \infty} \dfrac{R_e}{{\rm{log}}(P)}
={\rm{dim}}\{{\rm{span}}({\bf{G}}_{1}{\bf{V}})/{\rm{span}}({\bf{G}}_{2}{\bf{W}})\}. \label{eq25}
\end{align}
Substituting (\ref{eq23})-(\ref{eq25}) into (\ref{eq8}), we arrive at (\ref{eq18a}) and (\ref{eq18b}).
This completes the proof.

\section{Proof of \emph{Proposition 3}} \label{appB}
By definition, we have $\bar{\mathcal{D}} \subset {\mathcal{D}}$. Thus, the boundary of $\bar{\mathcal{D}}$
is included by that of ${\mathcal{D}}$.
In the following, we show that
for any given precoding matrices $({\bf V}, {\bf W}) \in {\mathcal I}$,
we can always find another precoding matrices $({\bf V}^\prime,{\bf W}^\prime) \in \bar{\mathcal I}$,
which satisfy $d_s^1({\bf V},{\bf W}) \le d_s^1({\bf V}^\prime,{\bf W}^\prime)$ and
$d_s^2({\bf V},{\bf W})\le d_s^2({\bf V}^\prime,{\bf W}^\prime)$. So, the boundary of ${\mathcal{D}}$
is included by that of $\bar{\mathcal{D}}$. Concluding,
the outer boundary of ${\mathcal{D}}$ is the same as that of $\bar{\mathcal{D}}$.

Before proceeding, we first introduce two critical properties on matrix
that will be used in the following analyses. That is, for any given matrices $\bf A$ and $\bf B$,
if $\bf B$ is invertible, then
\begin{align}
&{\rm {span}}({\bf A})={\rm {span}}({\bf A}{\bf B}),  \label{eq26}  \\
& {\rm {rank}}\{{\bf A}\}={\rm {rank}}\{{\bf A}{\bf B}\}\label{eq27}.
\end{align}

In what follows, based on the GSVD decomposition of $({\bf H}_{12}{\bf W},{\bf H}_{11}{\bf V})$
we first construct a precoding matrix pair $(\hat{\bf V},\hat{\bf W})$, which excludes the
intersection subspace of ${\rm{span}}({\bf H}_{12}{\bf W})$ and ${\rm{span}}({\bf H}_{11}{\bf V})$
without decreasing the achieved S.D.o.F. pair. Further, based on the
GSVD decomposition of $({\bf G}_2\hat{\bf W},{\bf G}_1\hat{\bf V})$ we construct a
precoding matrix pair $({\bf V}^\prime,{\bf W}^\prime)$, which excludes the subspace
${\rm{span}}({\bf G}_{21}\hat{\bf V})/{\rm{span}}({\bf G}_{22}\hat{\bf W})$ without decreasing the achieved S.D.o.F. pair.
In this way, we finish the construction of the wanted precoding matrix pair.


Consider the decomposition
\begin{align}
{\rm {GSVD}}&({\bf H}_{12}{\bf W},{\bf H}_{11}{\bf V};N_d^1,K_w,K_v) \nonumber\\
&=(\hat {\bf\Psi}_1,\hat{\bf\Psi}_2,\hat{\bf \Lambda}_1,\hat{\bf \Lambda}_2,\hat{\bf X},\hat k, \hat r,\hat s,\hat p). \label{eq28}
\end{align}
Let $\hat{\bf\Psi}_2^0=[\hat{\bf\Psi}_{21},\hat{\bf\Psi}_{23}]$.
Since $\hat{\bf\Psi}_1$ and $\hat{\bf\Psi}_2$ are invertible,
$\hat{\bf\Psi}_1^\prime=[\hat{\bf\Psi}_{11},\hat{\bf\Psi}_{13},\hat{\bf\Psi}_{12}]$ and
$\hat{\bf\Psi}_2^\prime=[\hat{\bf\Psi}_2^0,\hat{\bf\Psi}_{22}]$ are also invertible.
Applying (\ref{eq26}) and (\ref{eq27}), we have
\begin{subequations}
\begin{align}
d_s^1({\bf V},{\bf W})&=d_s^1({{\bf V}\hat{\bf\Psi}_2^\prime},{{\bf W}\hat{\bf\Psi}_1^\prime}) \label{eq30a}\\
& = {\rm {rank}}\{{\bf{H}}_{11}{\bf V}\hat{\bf\Psi}_2^0\}
-m({\bf V}\hat{\bf\Psi}_2^\prime,{\bf W}\hat{\bf\Psi}_1^\prime) \label{eq30b}\\
&\le {\rm {rank}}\{{\bf{H}}_{11}{\bf V}\hat{\bf\Psi}_2^0\}
-m({\bf V}\hat{\bf\Psi}_2^0,{\bf W}\hat{\bf\Psi}_1^\prime), \label{eq30c}
\end{align}
\end{subequations}
in which (\ref{eq30b}) can be justified with ${\rm{span( }}{{\bf{H}}_{12}}{\bf{W}}\hat{\bf\Psi}_1^\prime
{\rm{) }} \cap {\rm{span}}({{\bf{H}}_{11}}{\bf{V}}\hat{\bf\Psi}_2^\prime)
={\rm{span}}({{\bf{H}}_{11}}{\bf{V}}\hat{\bf\Psi}_{22}$).
Besides, (\ref{eq30c}) comes from the fact that
$m({\bf V}\hat{\bf\Psi}_2^\prime,{\bf W}\hat{\bf\Psi}_1^\prime)
\ge m({\bf V}\hat{\bf\Psi}_2^0,{\bf W}\hat{\bf\Psi}_1^\prime)$.
Here $({\bf V}\hat{\bf\Psi}_2^0,{\bf W}\hat{\bf\Psi}_1^\prime)$
is the precoding matrix pair $(\hat {\bf V}, \hat {\bf W})$ we mentioned in the above text.

Consider the decomposition
\begin{align}
{\rm {GSVD}}&({\bf G}_{2}{\bf W}\hat{\bf\Psi}_1^\prime,{\bf G}_{1}{\bf V}\hat{\bf\Psi}_2^0;N_e,K_w,K_v-\hat s) \nonumber \\
&=(\breve {\bf\Psi}_1,\breve{\bf\Psi}_2,\breve{\bf \Lambda}_1,\breve{\bf \Lambda}_2,\breve{\bf X},\breve k, \breve r,\breve s,\breve p). \label{eq31}
\end{align}
Let $\breve{\bf\Psi}_2^1=[\breve{\bf\Psi}_{21}, \breve{\bf\Psi}_{22}]$.
Since $\breve{\bf\Psi}_1$ and $\breve{\bf\Psi}_2$ are invertible,
$\breve{\bf\Psi}_1^\prime=[\breve{\bf\Psi}_{13}, \breve{\bf\Psi}_{11}, \breve{\bf\Psi}_{12}]$ and
$\breve{\bf\Psi}_2^\prime=[\breve{\bf\Psi}_{23}, \breve{\bf\Psi}_2^1]$ are also invertible.
Applying (\ref{eq26}) and (\ref{eq27}), we have
\begin{subequations}
\begin{align}
& {\rm {rank}}\{{\bf{H}}_{11}{\bf V}\hat{\bf\Psi}_2^0\}
-m({\bf V}\hat{\bf\Psi}_2^0,{\bf W}\hat{\bf\Psi}_1^\prime) \nonumber \\
&={\rm {rank}}\{{\bf{H}}_{11}{\bf V}\hat{\bf\Psi}_2^0\breve{\bf\Psi}_2^\prime\}
-m({\bf V}\hat{\bf\Psi}_2^0\breve{\bf\Psi}_2^\prime,{\bf W}\hat{\bf\Psi}_1^\prime\breve{\bf\Psi}_1^\prime) \label{eq33a}\\
& ={\rm {rank}}\{{\bf{H}}_{11}{\bf V}\hat{\bf\Psi}_2^0\breve{\bf\Psi}_2^\prime\}
-{\rm {rank}}\{\breve{\bf\Psi}_{23}\} \label{eq33b}\\
& \le {\rm {rank}}\{{\bf{H}}_{11}{\bf V}\hat{\bf\Psi}_2^0\breve{\bf\Psi}_2^1\}.  \label{eq33c}
\end{align}
\end{subequations}
Here, since ${\rm{span}}({\bf{G}}_{1}{\bf{V}}\hat{\bf\Psi}_2^0\breve{\bf\Psi}_2^\prime)
/{\rm{span}}({\bf{G}}_{2}{\bf{W}}\hat{\bf\Psi}_1^\prime\breve{\bf\Psi}_1^\prime)
={\rm{span}}({\bf{G}}_{1}{\bf{V}}\hat{\bf\Psi}_2^0\breve{\bf\Psi}_{23})={\rm {rank}}\{\breve{\bf\Psi}_{23}\}$,
we see that (\ref{eq33b}) holds true.
Since ${\rm {rank}}\{{\bf{H}}_{11}{\bf V}\hat{\bf\Psi}_2^0\breve{\bf\Psi}_{23}\}
\le {\rm {rank}}\{\breve{\bf\Psi}_{23}\}$ and ${\rm {rank}}\{{\bf{H}}_{11}{\bf V}\hat{\bf\Psi}_2^0\breve{\bf\Psi}_2^\prime\}
\le {\rm {rank}}\{{\bf{H}}_{11}{\bf V}\hat{\bf\Psi}_2^0\breve{\bf\Psi}_2^1\}
+{\rm {rank}}\{{\bf{H}}_{11}{\bf V}\hat{\bf\Psi}_2^0\breve{\bf\Psi}_{23}\}$, we see that (\ref{eq33c}) holds true.

Combining (\ref{eq30a})-(\ref{eq30c}) with (\ref{eq33a})-(\ref{eq33c}), we arrive at
\begin{align}
d_s^1({\bf V},{\bf W}) \le {\rm {rank}}\{{\bf{H}}_{11}{\bf V}\hat{\bf\Psi}_2^0\breve{\bf\Psi}_2^1\}\label{eq34}.
\end{align}

On the other hand,
according to (\ref{eq31}), it holds that
$m({\bf V}\hat{\bf\Psi}_2^0\breve{\bf\Psi}_2^1,{\bf W}\hat{\bf\Psi}_1^\prime\breve{\bf\Psi}_1^\prime)=0$, which indicates
\begin{align}
{\rm{span}}({\bf{G}}_{1}{\bf V}\hat{\bf\Psi}_2^0\breve{\bf\Psi}_2^1)
\subset {\rm{span}}({\bf{G}}_{2}{\bf W}\hat{\bf\Psi}_1^\prime\breve{\bf\Psi}_1^\prime)\label{eq35}.
\end{align}
According to (\ref{eq28}),
${\rm{span}}( {\bf{H}}_{12}{\bf W}\hat{\bf\Psi}_1^\prime) \cap {\rm{span}}({\bf H}_{11}{\bf V}\hat{\bf\Psi}_2^0)={\bf 0}$,
which together with
${\rm{span}}( {\bf{H}}_{12}{\bf W}\hat{\bf\Psi}_1^\prime)={\rm{span}}( {\bf{H}}_{12}{\bf W}\hat{\bf\Psi}_1^\prime\breve{\bf\Psi}_1^\prime)$ and
${\rm{span}}({\bf H}_{11}{\bf V}\hat{\bf\Psi}_2^0) \supset {\rm{span}}({\bf H}_{11}{\bf V}\hat{\bf\Psi}_2^0\breve{\bf\Psi}_2^1)$, implies
\begin{align}
{\rm{span}}( {\bf{H}}_{12}{\bf W}\hat{\bf\Psi}_1^\prime\breve{\bf\Psi}_1^\prime)
\cap {\rm{span}}({\bf H}_{11}{\bf V}\hat{\bf\Psi}_2^0\breve{\bf\Psi}_2^1)={\bf 0}\label{eq36}.
\end{align}
Combining (\ref{eq35}) and (\ref{eq36}), we arrive at
\begin{align}
({\bf V}\hat{\bf\Psi}_2^0\breve{\bf\Psi}_2^1,
{\bf W}\hat{\bf\Psi}_1^\prime\breve{\bf\Psi}_1^\prime ) \in  \bar{\mathcal I}\label{eq37}.
\end{align}

Let ${\bf V}^\prime={\bf V}\hat{\bf\Psi}_2^0\breve{\bf\Psi}_2^1$ and
${\bf W}^\prime={\bf W}\hat{\bf\Psi}_1^\prime\breve{\bf\Psi}_1^\prime$. According to \emph{Corollary 2},
$d_s^1({\bf V}^\prime, {\bf W}^\prime)=
{\rm {rank}}\{{\bf{H}}_{11}{\bf V}\hat{\bf\Psi}_2^0\breve{\bf\Psi}_2^1\}$, which
together with (\ref{eq34}), gives $d_s^1({\bf V},{\bf W}) \le d_s^1({\bf V}^\prime, {\bf W}^\prime)$.
Besides, ${\rm{span}}({\bf H}_{21}{\bf V}^\prime) \subset {\rm{span}}({\bf H}_{21}{\bf V})$
and ${\rm{span}}({\bf H}_{22}{\bf W}^\prime) = {\rm{span}}({\bf H}_{22}{\bf W})$. So
$d_s^2({\bf V},{\bf W}) \le d_s^2({\bf V}^\prime, {\bf W}^\prime)$. This completes the proof.

\section{Proof of \emph{Corollary 1}} \label{appC}
Since by definition $\hat{\mathcal I} \subset \bar{\mathcal I}$, it holds that $\hat{\mathcal{D}} \subset \bar{\mathcal{D}}$.
In the sequel, we will show that for any given $({\bf V},{\bf W} )\in \bar{\mathcal I}$, we can
always construct another feasible point $({\bf V}^\star,{\bf W}^\star) \in \hat{\mathcal I}$,
which satisfy $d_s^1({\bf V}^\star,{\bf W}^\star)= d_s^1({\bf V} ,{\bf W} )$
and $d_s^2({\bf V}^\star,{\bf W}^\star)= d_s^2({\bf V} ,{\bf W} )$, thus giving the proof of
$\hat{\mathcal{D}} \supset \bar{\mathcal{D}}$. Concluding, it holds that $\bar{\mathcal{D}}=\hat{\mathcal{D}}$.

For any given $({\bf V} ,{\bf W})\in \bar{\mathcal I}$, ${\bf V} \in {\mathbb {C}}^{N_s^1 \times K_v}$,
${\bf W} \in {\mathbb {C}}^{N_s^2 \times K_w}$, we should have $({\bf V} ,{\bf W})\in \bar{\mathcal I}_1$
and $({\bf V} ,{\bf W})\in \bar{\mathcal I}_2$.
Since all channel matrices are assumed to be full rank,
it holds that ${\rm {rank}}\{{\bf{G}}_{2}{\bf W}\}=\min\{K_w,N_e\}$.

In the following, we consider two distinct cases.

(i) For the case of $K_w\ge N_e$, it holds that ${\rm {rank}}\{{\bf{G}}_{2}{\bf W}\}=N_e$.
Denote ${\bf{G}}_{2}{\bf W}=\left[{\bf U}_{1}\ {\bf U}_{0} \right]
\left[ {\begin{array}{*{20}{c}}
{{{\bf{\Sigma}}_{1}}}&{\bf{0}}\\
{\bf{0}}&{{{\bf{0}}}}
\end{array}} \right]
\left[ {\begin{array}{*{20}{c}}
{{{\bf{T}}_{1}^H}}\\
{{{\bf{T}}_{0}^H}}
\end{array}} \right]$ as the SVD of ${\bf{G}}_{2}{\bf W}$.
Then, the matrix ${\bf{G}}_{2}{\bf W}{\bf{T}}_{1}$ is invertible.
Let ${\bf B} ={\bf{T}}_{1}({\bf{G}}_{2}{\bf{W}}{\bf{T}}_{1} )^{-1}{\bf{G}}_{1}{\bf{V}} $.
Then,
\begin{align}
{\bf{G}}_{1}{\bf{V}} ={\bf{G}}_{2}{\bf{W}}{\bf{T}}_{1}({\bf{G}}_{2}{\bf{W}}{\bf{T}}_{1} )^{-1}{\bf{G}}_{1}{\bf{V}}= {\bf{G}}_{2}{\bf{W}}{\bf B}. \label{eq405}
\end{align}

(ii) For the case of $K_w< N_e$, ${\bf{G}}_{2}{\bf W}$ is full column rank.
Let ${\bf P}$ be the projection matrix of ${\bf{G}}_{2}{\bf{W}}$, i.e.,
\begin{align}
{\bf P}={\bf{G}}_{2}{\bf{W}} (({\bf{G}}_{2}{\bf{W}} )^H{\bf{G}}_{2}{\bf{W}} )^{-1}({\bf{G}}_{2}{\bf{W}} )^H.
\label{eq39}
\end{align}
Due to $({\bf V} ,{\bf W})\in \bar{\mathcal I}_1$, it holds that
\begin{align}
{\bf{G}}_{1}{\bf{V}}  &={\bf P}{\bf{G}}_{1}{\bf{V}}.   \label{eq40}
\end{align}
Substituting (\ref{eq39}) into (\ref{eq40}) and letting ${\bf B} =(({\bf{G}}_{2}{\bf{W}} )^H{\bf{G}}_{2}{\bf{W}} )^{-1}
({\bf{G}}_{2}{\bf{W}} )^H{\bf{G}}_{1}{\bf{V}} $, we arrive at
\begin{align}
{\bf{G}}_{1}{\bf{V}} ={\bf{G}}_{2}{\bf{W}} {\bf B}. \label{eq41}
\end{align}

Let ${\bf V}^\star={\bf V} $ and ${\bf W}^\star={\bf W}[{\bf B}\ {\bf B}^\perp]$.
Summarizing the above two cases, for both cases it holds that 
\begin{align}
&{\bf{G}}_{1}{\bf{V}}^\star={\bf{G}}_{2}{\bf{W}}^\star(:, 1:K_v), \nonumber 
\end{align}
which, combined with $({\bf V} ,{\bf W})\in \bar{\mathcal I}_2$, implies that
$({\bf V}^\star,{\bf W}^\star) \in \hat{\mathcal I}$.
On the other hand, since 
the matrix $[{\bf B}\  {\bf B}^\perp]$ is invertible, it holds that
$d_s^1({\bf V}^\star,{\bf W}^\star)= d_s^1({\bf V} ,{\bf W} )$
and $d_s^2({\bf V}^\star,{\bf W}^\star)= d_s^2({\bf V} ,{\bf W} )$.
This completes the proof.

\bibliography{mybib}
\bibliographystyle{IEEEtran}

\end{document}